\documentclass[runningheads]{llncs}
\usepackage[T1]{fontenc}

\usepackage{comment}
\usepackage{bm}
\usepackage{bbm}
\usepackage{tabularx}
\usepackage{float}
\usepackage{soul}
\usepackage{url}
\usepackage[utf8]{inputenc}
\usepackage[small]{caption}
\usepackage{graphicx}
\usepackage{amsmath}
\usepackage{xcolor}
\usepackage{booktabs}
\usepackage{algorithm}
\usepackage{algorithmic}
\usepackage{amssymb}
\usepackage{ltl}
\usepackage{wrapfig}
\usepackage{caption}
\usepackage[inline]{enumitem}
\usepackage{tablefootnote}
\usepackage{afterpage}
\usepackage{appendix}
\usepackage[hidelinks]{hyperref}
\usepackage{cleveref}
\hypersetup{
    colorlinks=true,
    linkcolor=blue,
    filecolor=black,      
    urlcolor=blue,
}
\usepackage{orcidlink}
\allowdisplaybreaks
\newcommand{\xmark}{$\times$}

\newtheorem{observation}{Observation}

\title{Sufficient Reasons and Explanations \\ for Reactive Systems\thanks{This work was supported by the Israel Science Foundation (ISF grant No. 655/25).}}

\author{Hadar Frenkel\orcidlink{0000-0002-3566-0338} \and
Nadav Rutman Moshe}

\authorrunning{H. Frenkel and N. Rutman Moshe}

\institute{Department of Computer Science and Artificial Intelligence, \\Bar-Ilan University, Israel\\
\email{hadar.frenkel@biu.ac.il}}

\begin{document}

%\linenumbers
\maketitle
\newcommand{\hadar}[1]{\begingroup\color{red}#1\endgroup}

\newcommand{\nadav}[1]{\begingroup\color{blue}#1\endgroup}
\newcommand{\quotes}[1]{``#1''}

\newcommand{\qonce}{\mathtt{p'\text{-}once}}
\newcommand{\ponce}{\mathtt{p\text{-}once}}
\newcommand{\qafter}{\mathtt{p'\text{-}after}}
\newcommand{\pafter}{\mathtt{p\text{-}after}}
\newcommand{\on}{\mathtt{on}}
\newcommand{\fresh}{\mathtt{f}}
\newcommand{\R}{\mathtt{r}}
\newcommand{\ap}{\textsc{AP}}
\newcommand{\ip}{\textsc{IP}}
\newcommand{\op}{\textsc{OP}}
\newcommand{\aug}{\mathtt{aug}}

\newcommand{\once}{\mathtt{once}}
\newcommand{\aut}{\mathcal{A}}
\newcommand{\cont}{\tau_{\text{ce}}}
\newcommand{\alltraces}{(2^\ap)^\omega}
\newcommand{\nat}{\mathbb{N}}

\newcommand{\phiTpi}{\tuple{\varphi, T, \pi}}
\newcommand{\SR}{SR}
\newcommand{\MSR}{MSR}
\newcommand{\MTSR}{MTSR}
\newcommand{\MSRs}{MSRs}
\newcommand{\MTSRs}{MTSRs}
\newcommand{\SRs}{SRs}
\newcommand{\TSR}{TSR}
\newcommand{\TSRs}{TSRs}
\newcommand{\TR}{\text{TR}}
\newcommand{\CE}{CE}
\newcommand{\CEs}{CEs}
\newcommand{\TCE}{TCE}
\newcommand{\TCEs}{TCEs}
\newcommand{\MTCE}{MTCE}
\newcommand{\MTCEs}{MTCEs}
\newcommand{\MCE}{MCE}
\newcommand{\MCEs}{MCEs}

\newcommand{\tuple}[1]{\ensuremath{\left \langle #1 \right \rangle }}
\newcommand{\set}[1]{\ensuremath{\{ #1 \} }}

\newcommand{\true}{\mathit{true}}
\newcommand{\false}{\mathit{false}}

\begin{abstract}
We address the problem of temporal causality and explainability for reactive systems, and, in this setting, study sufficient reasons and contrastive explanations. These two notions are well-known explainability measures in the context of neural networks. In this work, we unify these notions for reactive systems and formal specifications given in temporal logic, providing dedicated definitions for sufficient reasons and contrastive explanations. We then lift these definitions to \emph{temporal} sufficient reasons and contrastive explanations, providing more general and symbolic representations of explainability. We analyze the complexity of both verifying and finding explanations of the different types, and we demonstrate our approach using a prototype implementation.
\end{abstract}

\section{Introduction}

In this work, we address different aspects of explainability and causality studied in the literature, and study them in the context of reactive systems and temporal logic specifications. 
Reactive systems interact with their environment indefinitely, where, at each discrete time point, they can react to environmental inputs with some system outputs. They form the common formalism used in the verification community today, used as an abstraction for many different formalisms, from  hardware circuits~\cite{BAKHTERI20201999}, to software code~\cite{10.1007/978-3-642-12029-9_18}. 
The prominent logic to reason about the behavior of reactive systems over time is Linear Temporal Logic (LTL, \cite{Pnueli77}), widely used to formally prove that systems adhere to their specifications. But,
even if we can prove the correctness or incorrectness of a system, 
it is not always clear \emph{why} a system satisfies or falsifies its specification. 

In recent years, many works have studied explainability and causality in reactive systems~\cite{Leitner-FischerL13a,BaierDFJMPZ21,CaltaisGL18}. In particular, a recent line of work has studied \emph{temporal actual causality},
explaining effects given as LTL specifications: Based on the definitions of Halpern and Pearl \cite{HalpernP05a}, 
\cite{coenen2022hyperproperty} provides an algorithm for finding actual causes as finite sets of events on the trace;  
\cite{DBLP:conf/atva/CoenenFFHMS22} defines causes as temporal properties; \cite{FFMS24} presents an algorithm for the synthesis of temporal actual causes as automata over infinite words; and \cite{LICS25} provides a complexity analysis. 

In this work, we study different formalisms for temporal explainability, namely, \emph{sufficient reasons} and \emph{contrastive explanations}~\cite{DBLP:journals/pacmmod/BarceloKRSV25}, which are intuitive formalisms that are already used to explain the outputs of neural networks. 
Similar to Halpern and Pearl's actual causality, sufficient reasons take into account the \emph{current} input to the system, and aim at explaining the behavior of the system, when given this input. 
But,
while Halpern and Pearls' actual causes are the most refined reasons for a given effect, sufficient reasons provide the most general reason for the effect (though they might collide in some cases). For example,~\cite{HalpernP05a} describe a scenario in which Suzy and Billy throw a rock, but since Suzy is stronger, her rock hits a bottle first, which causes the bottle to break. In their actual causality analysis, they define an elaborated mechanism to derive that Suzy is the actual cause for the broken bottle. 
\emph{Sufficient reasons} define the sufficient events,
that, had they occurred, the result would stay the same.
%which, given they had happened, the result would stay the same. 
In this case, \emph{Suzy and Billy throw the rock} is a sufficient reason, but minimal sufficient reasons would be \emph{Suzy throws the rock}, or \emph{Billy throws the rock}. While actual causality can be used to derive accountability (here, Suzy is to blame for the broken bottle), (minimal) sufficient reasons tell us what are the sufficient actions we need to take in order to achieve the same outcome. 
\emph{Contrastive explanations}, that we study at the end of this paper, describe what are the minimal changes we need to do to avoid the outcome. 

In the context of reactive systems, when a system violates its specification, a model-checker outputs a counterexample trace \cite{DBLP:conf/cav/BraytonM10}. Applying our causal analysis on this trace using sufficient reasons, we can find the most general behavior of the system that would still violate the specification (analysis which one can then use to repair the system). In contrast, we can use contrastive explanations to find minimal changes that would help avoiding the violation. 
In the same way, we can find positive example traces that satisfy the specification, and use 
our proposed analysis to derive what actions would maintain the same positive behavior (using sufficient reasons), or what are minimal changes that would lead to a violation, to help us avoid it (using contrastive explanations). 

Motivated by the need for such intuitive and simple definitions of explainability, 
%
%for (minimal) sufficient reasons, as well as for contrastive explanations, 
we provide definitions of (minimal) sufficient reasons and contrastive explanations for reactive systems, both as sets of events (not limited to finite sets), and as temporal reasons and explanations. We present an algorithmic approach for verifying and finding (events/temporal) reasons and explanations, and demonstrate our approach with a prototype implementation, and we provide complexity analysis for each problem: verification/synthesis of sufficient reasons/contrastive explanations in the temporal/non-temporal setting.

Following these definitions, this work presents a new approach for temporal explainability for reactive systems, allowing the user to choose what definitions to use depending on the type of desired explanation. Moreover, we establish a common ground for a formal definition of explainability, already used across different fields.

\paragraph{Contributions}
We provide definitions, complexity analysis, and algorithms for sufficient reasons and contrastive explanations for effects given in temporal logics. We show that for minimal sufficient reasons that are sets of events on the trace, there is always an ultimately periodic representation for this set of events, which we can find by solving a model-checking instance.
Our use of logical formalism to express minimal sufficient reasons allows us to find \emph{infinite} sets of events on a trace.
For minimal temporal sufficient reasons, that is, the reasons themselves are temporal properties and not only sets of events, we show that there is always a unique, $\omega$-regular minimal temporal sufficient reason. We show that the problems of verifying (minimal, temporal) sufficient reasons and contrastive explanations are all PSPACE complete, which stems from the underlying LTL model-checking problem. 
The main focus of our work is the analysis of sufficient reasons. In addition, in \Cref{sec:ContrastiveExplanations}, we briefly address contrastive explanations. While contrastive explanations were addressed in previous works~\cite{FS23,DBLP:journals/corr/abs-2310-01001} as counterfactual traces, here we provide a complexity analysis for the synthesis of contrastive explanations, and provide a characterization of \emph{temporal} contrastive explanations, that were not studied before. 
Finally, we present a prototype implementation. %that demonstrates our approach. 

\section{Preliminaries}\label{sec: prelim}
\subsubsection*{Transition Systems and Temporal Logic} 
Given a finite set $X$, $X^*$ is the set of all finite sequences of elements of $X$, $X^+$ is the set of all non-empty finite sequences, and $X^\omega$ is the set of all infinite sequences of elements of $X$.
\paragraph{Transition Systems and Traces}
Let ${\ap}$ be a finite set of atomic propositions, such that $\ap = \ip \cup \op$ is the union of input propositions ($\ip$) and output propositions $(\op)$ with $\ip \cap \op = \emptyset$. 
We model a reactive transition system as a Mealy Machine~\cite{Mealy} $T = (Q, q_0, \delta, \ip, \op)$ where 
%
%A \emph{labeled transition system} (LTS) $T = (Q, Q_0, \delta, L)$ is a tuple where 
$Q$ is a finite set of states, $q_0\in Q$ is the initial state, and 
$\delta: Q\times 2^\ip \rightarrow Q\times 2^\op$ is the transition function, mapping the current state and set of inputs to the next state and a set of outputs. 
%$\delta \subseteq Q\times Q$ is a transition relation, and $L: Q \rightarrow 2^{\ap}$ is a labeling function, that labels each state with a subset of atomic propositions.
$T$ is \emph{input deterministic}, that is, $\forall q\in Q, \forall i\in 2^\ip$, there exists exactly one pair $(q',o)$ such that $\delta(q,i) = (q',o)$.

A \emph{path} of $T$ is an infinite sequence of states $p_0 p_1 p_2 \cdots \in Q^\omega$ such that $p_0 = q_0$ and $\forall j: \exists i_j \in 2^\ip,o_j\in 2^\op$ such that $\delta(p_j, i_j) = (p_{j+1}, o_j)$.
The \emph{trace} induced by such a path is 
$\pi_0 \pi_1 \pi_2 \cdots \in (2^\ap)^\omega$ where $\forall j: \pi_j = i_j \cup o_j$.

 $\TR(T)$ is the set of all traces induced by paths of a transition system $T$.

We say that a trace $\pi \in (2^{\ap})^\omega$ is \emph{ultimately periodic} if there exist $u \in (2^{\ap})^*, u \in (2^{\ap})^+$ such that $\pi = u\cdot v^\omega$.

For an infinite trace $\pi$, $\pi[j]$ denotes the $j$'th location of $\pi$, and $\pi[j,\infty)$ denotes the suffix of $\pi$ starting from position $j$.

The \emph{projection} of a trace $\pi \in (2^\ap)^\omega$ on a subset $\ap' \subset \ap$ is $\pi_{\downarrow_{\ap'}} = (\pi[0] \cap \ap') (\pi[1] \cap\ap' )\cdots \in (2^{\ap'})^\omega $.

For traces $\pi_1 \in (2^{\ap_1})^\omega$, $\pi_2 \in (2^{\ap_2})^\omega$ we define their union $\pi_1 \cup \pi_2$ as dot-wise union, that is $\pi_1 \cup \pi_2 = ( \pi_1[0] \cup \pi_2[0] ) ( \pi_1[1] \cup \pi_2[1] )( \pi_1[2] \cup \pi_2[2] )\cdots $.

\paragraph{Büchi Automata}
A Non-deterministic Büchi Automaton (NBA) over infinite words is $ \aut = (\Sigma, Q, q_0, \delta, F)$ where $\Sigma$ is a finite alphabet, $Q$ is a finite set of states and $q_0 \in Q$ is the initial state, $\delta \subseteq Q\times\Sigma\times Q$ is the transition relation, and $F\subseteq Q$ is the set of accepting states. 

A \emph{run} of an NBA $\aut$ on a word $\alpha_0\alpha_1\alpha_2 \cdots\in \Sigma^\omega$ is an infinite sequence $p_0 p_1 p_2 \cdots\in Q^\omega$ such that $p_0=q_0$ and $\forall j: (p_j, \alpha_j, p_{j+1}) \in \delta$. 

An NBA $\aut$ accepts a word $\alpha$ if there is a run $p_0 p_1 p_2\cdots$ of $\aut$ on $\alpha$ such that for infinitely many $j$'s we have $p_j\in F$. The \emph{language} of $\aut$, denoted by $\mathcal{L}(\aut)$, is the set of all words accepted by $\aut$.

We can view any Mealy machine 
$T = (Q, q_0, \delta, \ip, \op)$ 
as an NBA $ \aut = (2^\ap, Q, q_0, \delta', F)$ such that $F = Q$ and $\delta' = \{  (q,i \cup o, q') ~|~ i\in 2^\ip, o\in 2^\op, \delta(q, i) = (q', o)  \}$. Therefore, in the rest of the paper, we perform operations such as intersection between an NBA and a transition system, assuming an automaton representation for the system.

\paragraph{LTL and HyperLTL} 
We define Linear Temporal Logic (LTL)~\cite{Pnueli77} and its relational extension HyperLTL~\cite{ClarksonFKMRS14}. 
The goal of this paper is to provide reasons and explanations for the behavior of a system with respect to LTL specifications. We use HyperLTL as a meta-logic to express reasons and explanations. Both logics are defined over the set of atomic propositions $\ap$. HyperLTL is also defined over a finite set $\mathcal{V}$ of trace variables. 
The definition of LTL is given in black using the following grammar. The \textcolor{blue}{blue} parts are the HyperLTL extension. 
%\begin{equation*}
\begin{align*}
 & \psi := a ~|~ \psi \vee \psi ~|~ \neg\psi ~|~ \LTLnext \psi ~|~ \psi\LTLuntil \psi 
  & \textcolor{blue}{\varphi := \exists \pi. \varphi ~|~ \forall \pi. \varphi ~|~ \psi} 
\end{align*}
%\tag{$\star$}\label{eq:hyperLTL}
%\end{equation*}
%
Where $a\in \ap$ for LTL formulas, and $a\in \{ p_\pi ~|~ p\in\ap, \pi\in \mathcal{V} \}$ for HyperLTL formulas. That is, HyperLTL formulas are LTL formulas where atomic propositions are augmented by trace variables, and with a quantifier prefix. 
%, with a prefix of existential and universal trace quantifiers. 

In addition,
$\LTLnext$ is the temporal operator \emph{next} and $\LTLuntil$ is the temporal operator \emph{until}. We use the usual derived Boolean operators $\wedge, \rightarrow, \leftrightarrow$, and the derived temporal operators $\LTLeventually \psi = \mathit{true}\LTLuntil \psi$  (\emph{eventually} $\psi$) and $\LTLglobally \psi= \neg \LTLeventually \neg \psi$ (\emph{globally}~$\psi$). 

\emph{LTL Semantics:}
We define when an infinite word $\alpha\in (2^\ap)^\omega $
satisfies an LTL formula $\varphi$ inductively, as follows: 
\begin{align*}
       &\alpha \vDash p&\text{ iff } & p\in \alpha[0] \\ 
    &\alpha \vDash \neg \psi &\text{ iff } &\alpha \nvDash \psi \\
    & \alpha\vDash \psi_1 \vee \psi_2  &\text{ iff } &  \alpha \vDash \psi_1 \text{ or } \alpha \vDash \psi_2 \\
       &\alpha \vDash \LTLnext \psi &\text{ iff } & \alpha[1,\infty) \vDash \psi \\ 
      &\alpha \vDash \psi_1 \LTLuntil \psi_2 &\text{ iff } & \exists t_2\in\mathbb{N} \text{ s.t. } \alpha[t_2,\infty) \vDash \psi_2 \text{ and } \forall t_1< t_2: \alpha[t_1,\infty) \vDash \psi_1 
\end{align*}
%We say that a 
A system $T$ satisfies an LTL formula $\varphi$, denoted by $T\vDash \varphi$ iff $\forall \alpha \in \TR(T): \alpha\vDash\varphi$.

The language of an LTL formula $\varphi$ is the set of all words that satisfy $\varphi$: 
$\mathcal{L}(\varphi) = \{ \alpha\in (2^\ap)^\omega~|~ \alpha\vDash\varphi \} $. 
Every LTL formula $\varphi$ can be translated into an NBA $\aut$ such that $\mathcal{L}(\varphi) = \mathcal{L}(\aut)$~\cite{DBLP:conf/lics/VardiW86}. We denote this NBA by $\aut_\varphi$ and will use this notation throughout the paper. 

\emph{HyperLTL Semantics} are given with respect to a set of traces $\mathcal{T}$ and a trace assignment $\Pi: \mathcal{V}\rightarrow \mathcal{T}$. Given a trace variable $\pi\in\mathcal{V}$ and a trace $\tau\in\mathcal{T}$, we denote by $\Pi[\pi\mapsto \tau]$ the trace assignment that coincides with $\Pi$ on all variables, except for $\pi$ which is mapped to $\tau$. 
For the assignment $\Pi$, we denote by $\Pi[t, \infty)$ the assignment that maps every variable $\pi \in \mathcal{V}$ to the suffix of $\Pi(\pi)$ starting from position $t$, that is: $\Pi(\pi)[t]\Pi(\pi)[t+1]\Pi(\pi)[t+2]\cdots$. 
We define the semantics of HyperLTL only for atomic propositions and for the quantifiers, since the semantics of the Boolean and temporal operators are defined in the same way as for LTL. We refer the reader to \cite{ClarksonFKMRS14}, or to Appendix~\ref{app:HyperLTL}, for the full definition.
\begin{align*}
     &\Pi, \mathcal{T} \vDash p_\pi &\text{ iff } & p\in\Pi(\pi)(0) \\ 
   % &\Pi, \mathcal{T} \vDash \neg \psi &\text{ iff } &\Pi,\mathcal{T} \nvDash \psi \\
   % & \Pi, \mathcal{T} \vDash \psi_1 \vee \psi_2  &\text{ iff } &  \Pi, \mathcal{T} \vDash \psi_1 \text{ or } \Pi, \mathcal{T} \vDash \psi_2 \\
    %   &\Pi, \mathcal{T} \vDash \LTLnext \psi &\text{ iff } & \Pi(\pi)[1,\infty),\mathcal{T} \vDash \psi \\ 
    %  &\Pi, \mathcal{T} \vDash \psi_1 \LTLuntil \psi_2 &\text{ iff } & \exists t_2\in\mathbb{N} \text{ s.t. } \Pi[t_2,\infty),\mathcal{T} \vDash \psi_2 \\ &&&\text{ and } \forall t_1< t_2: \Pi[t_1,\infty),\mathcal{T} \vDash \psi_1 \\ 
      &\Pi, \mathcal{T} \vDash \exists \pi. \varphi &\text{ iff } & \text{ there exists } \tau\in\mathcal{T} \text{ s.t. } \Pi[\pi\mapsto \tau],\mathcal{T}\vDash \varphi \\
       &\Pi, \mathcal{T} \vDash \forall \pi. \varphi &\text{ iff } & \text{ forall } \tau\in\mathcal{T}: \Pi[\pi\mapsto \tau],\mathcal{T}\vDash \varphi 
\end{align*}
We say that $\mathcal{T} $ satisfies $\varphi$, denoted $\mathcal{T} \vDash \varphi$, if $\Pi_\emptyset, \mathcal{T} \vDash \varphi$ where $\Pi_\emptyset: \mathcal{V}\rightarrow \emptyset$~is~the empty assignment, meaning that $\varphi$ contains no free trace variables that need to be assigned.
A system $T$ satisfies a HyperLTL formula~$\varphi$ if $\TR(T) \vDash\varphi$.

\paragraph{Trace Properties}
A trace property $P \subseteq (2^\ap)^\omega$ is a set of traces. For example, the language of an LTL formula, and the language of an NBA, are trace properties. 
We say that a word $\alpha$ satisfies a trace property $P$ if $\alpha \in P$. 

If $P$ is over a subset of $\ap' \subset \ap$, i.e., $P \subseteq (2^{\ap'})^\omega$, we say that a word $\alpha \in (2^\ap)^\omega$ satisfies $P$ if the projection of $\alpha$ over $\ap'$ satisfies $P$, i.e.,
$\alpha_{\downarrow_{\ap'}} \in P$.

\paragraph{Ultimately Periodic Sets}
A set $R \subseteq \mathbb{N}\times \ap$ is \emph{ultimately periodic} if there exist $n_0,p \in \mathbb{N}$ such that forall $n>n_0$, we have that $\tuple{n,a} \in R $ iff $\tuple{n+p,a} \in R$ forall $a\in \ap$
(in particular, if $R$ is finite then it is ultimately periodic). This way, ultimately periodic sets, even if infinite, yield a finite representation.

\paragraph{Additional Notations}
\begin{itemize*}[nosep]
        \item An \emph{event} is a pair $\tuple{n,a}\in\nat\times \ap$.\\
        \item For an LTL formula $\varphi$ and a trace $\pi$, we denote by $\varphi_\pi$ the formula $\varphi$ where all atomic propositions are parameterized by $\pi$. E.g., $(\LTLglobally p \vee \LTLeventually q)_\pi := \LTLglobally p_\pi \vee \LTLeventually q_\pi$.\\
    \item We denote a HyperLTL %or HyperQPTL 
    formula $\varphi$ with free variables $\pi_1, \ldots, \pi_n$ by $\varphi[\pi_1, \ldots, \pi_n]$.\\
    \item For traces $\pi$ and $\pi'$, and a set $A\subseteq \ap$ we use 
      $\pi =_{A} \pi'$ to denote $\bigwedge_{p\in A} p_\pi \leftrightarrow p_{\pi'}$, i.e., the two traces agree on all propositions in $A$ at a certain timestep. If we want to explicitly refer to a timestep $t$, we write $\pi[t] =_{A} \pi'[t]$. 
    %to denote that the two traces agree on all propositions in $A$  at timestep $i$, that is: $\bigwedge_{p\in A} p\in \pi[i] \leftrightarrow p\in \pi'[i]$. Note that this is not a HyperQPTL formula. 
    If $A = \ap$ we simply write $\pi = \pi'$. We use ${\pi \equiv_{A} \pi'}$ to denote that the traces are globally equal, that is: ${\pi \equiv_{A} \pi'} := \LTLglobally (\pi =_{A} \pi')$. %This generalizes the above formula for all timesteps. 
    Again, if $A=\ap$ we write $\pi \equiv \pi'$.% \hadar{remove this? we need to many equality notations}
\end{itemize*}
%\hadar{need to add that the reason actually holds, otherwise everything can be a reason vacuously}
\subsubsection*{Explanations}
%In the following, 
We present the definitions as given in~\cite{DBLP:journals/pacmmod/BarceloKRSV25}, with slight modifications for ease of readability. 
\paragraph{Sufficient Reasons (Abductive Explanations)}\label{sec:prelin:suff}
Given a function $f: X^n \rightarrow Z$, a sufficient reason for $f(\overline{x}) = z$ is a subset $R\subseteq [n] $ such that for every $\overline{y} \in X^n$: if $\forall j\in R: y[j] = x[j]$  then $f(\overline{x}) = f(\overline{y})$ ($\overline{x}[j]$ denotes the $j$'th component of $\overline{x}$).

$R$ is a \emph{minimal sufficient reason} if it is a smallest set in terms of containment, i.e., there is no sufficient reason $R'$ with $R' \subsetneq R$. 

\paragraph{Contrastive Explanations (Counterfactual Explanations)}
A contrastive explanation for $f(\overline{x}) = z$ is a vector $\overline{y}\in X^n$ such that $f(\overline{x}) \neq f(\overline{y})$.\footnote{In particular, following the definition of \cite{DBLP:journals/pacmmod/BarceloKRSV25}, sufficient reasons are of a different type than contrastive explanations, as the former is given as a set of components of the domain, while the latter is a specific vector.} 

%\emph{Minimal Contrastive Explanations}
%We would usually look for a \emph{close} contrastive explanation, therefore, 
Given a distance measure $d :X^n \times X^n \rightarrow \mathbb{N}$, a \emph{minimal contrastive explanation} $\overline{y}\in X^n$
is a contrastive explanation, such that for all contrastive explanations $\overline{y'}\in X^n$
we have $d(\overline{x},\overline{y}) \leq d(\overline{x},\overline{y'})$.

\section{Running Example}\label{sec: example}

\begin{figure}[t]
    \centering
    
\begin{tikzpicture}[scale=0.13,  every node/.append style={scale=0.8}]
\tikzstyle{every node}+=[inner sep=0pt]
\draw [black] (24.2,-23.4) circle (3);
\draw (24.2,-23.4) node {$p_0$};
\draw [black] (44.6,-23.4) circle (3);
\draw (44.6,-23.4) node {$p_1$};
\draw [black] (15.9,-23.4) -- (21.2,-23.4);
\fill [black] (21.2,-23.4) -- (20.4,-22.9) -- (20.4,-23.9);
\draw [black] (21.722,-21.73) arc (263.75472:-24.24528:2.25);
\draw (18,-19.03) node [above] {$\{\!\}|\{\!\}$};
\fill [black] (24.02,-20.42) -- (24.6,-19.68) -- (23.61,-19.57);
\draw [black] (27.2,-23.4) -- (41.6,-23.4);
\fill [black] (41.6,-23.4) -- (40.8,-22.9) -- (40.8,-23.9);
\draw (34.4,-22.9) node [above] {$\{\!r_1\!\},\{\!r_1,r_2\!\}|\{\!g_1\!\}$};
\draw [black] (23.341,-26.262) arc (11.03438:-276.96562:2.25);
\draw (15.18,-26.82) node [below] {$\{\!r_2\!\}|\{\!g_2\!\}$};
\fill [black] (21.41,-24.46) -- (20.52,-24.12) -- (20.72,-25.1);
\draw [black] (45.61,-20.587) arc (187.98519:-100.01481:2.25);
\draw (49.98,-19.24) node [right] {$\{\!\},\{\!r_2\!\}|\{\!\}$};
\fill [black] (47.45,-22.49) -- (48.31,-22.87) -- (48.17,-21.88);
\draw [black] (47.577,-23.657) arc (112.79507:-175.20493:2.25);
\draw (51.01,-26.17) node [right] {$\{\!r_1\!\},\{\!r_1,r_2\!\}|\{\!g_1\!\}$};
\fill [black] (46.21,-25.92) -- (46.06,-26.85) -- (46.98,-26.46);
\end{tikzpicture}

    \caption{A transition system over $\ip = \{\!r_1, r_2\!\}, \op = \{\!g_1, g_2\!\}$}
    \label{fig:motivation}
\end{figure}

%Consider the system of \Cref{fig:motivation}. This is a simple system, but it describes an interesting (and undesired) behviour, that will help us demostrate the motivation for sufficient reasons and contrastive explanations. 

Consider the system of \Cref{fig:motivation}, representing a biased arbiter: once $\mathsf{user}1$ issues a request ($r_1$), $\mathsf{user}2$ can never be granted access to some resource ($g_2$). On the other hand, every request of $\mathsf{user}1$ is answered by a grant ($g_1$), and requests of $\mathsf{user}2$ are answered by grants as long as $\mathsf{user}1$ did not issue any request. 

\textbf{(Minimal) sufficient reasons:}
Consider the simple specification \emph{$\mathsf{user}2$ is eventually granted access}, expressed by the LTL formula $\varphi_1 = \LTLeventually g_2$, and the counterexample trace $\pi = \{r_1, r_2, g_1\}^\omega$. A minimal sufficient reason for the violation of $\varphi_1$ on $\pi$ would be a minimal set of events that appear on $\pi$ and are sufficient to determine that $\pi \vDash \neg \varphi_1$. In this case,
$\{\langle 0,r_1\rangle\}$, 
is such a minimal sufficient reason: 
 %every trace on which $\mathsf{user}1$ issues a request on the first timestep, would never have $g_2$ -- 
it is sufficient to observe $r_1$ on time $0$, to never observe $g_2$. 
A sufficient reason for $\neg \varphi_1 := \LTLglobally \neg g_2$ would be any superset of $\{\langle 0,r_1\rangle\}$. 

\textbf{Temporal sufficient reasons:}
%Now, consider the specification \emph{every request of $\mathsf{user}2$ is eventually answered by a grant}, expressed using the LTL formula $\varphi_2 = \LTLglobally(r_2 \rightarrow \LTLeventually g_2)$, with the same counterexample trace $\pi$. We look for a sufficient reason for the violation of $\varphi_2$, that is, $\LTLeventually (r_2 \wedge \LTLglobally \neg g_2)$
Now, assume we want $\mathsf{user}1$ to never be granted access, via the LTL specification $\LTLglobally \neg g_1$. It's negation is $\LTLeventually g_1$, and the trace $\pi$ above is a witness for the violation. 
%consider the same trace $\pi$ with the effect   $\varphi_{{1}}\ = \LTLeventually r_1$. 
Every occurrence of $r_1$ on $\pi$ constitutes a sufficient reason for $\LTLeventually g_1$. However, in this case, we might want to describe the reason in a more general way, as a temporal property. A temporal sufficient reason would be any property that $\pi$ satisfies which is sufficient to observe $\LTLeventually g_1$. 
Here, temporal sufficient reasons are: $\LTLglobally r_1$ -- every trace that always has $r_1$ (as $\pi$), would satisfy $\LTLeventually g_1$; $\LTLglobally \LTLeventually r_1$; simply $r_1$; and $\LTLeventually r_1$, which is also the minimal temporal sufficient reason (see \Cref{sec:TemporalReasons}): While the trace~$\pi$ satisfies $\LTLglobally r_1$, it is sufficient that it would have only satisfied $\LTLeventually r_1$ to observe~$\LTLeventually g_1$.

\textbf{Contrastive explanations:} For $\LTLeventually g_1$ and $\pi$ above, a contrastive explanation is a closest trace $\tau$ to $\pi$, on which $\LTLeventually g_1$ does not hold,
%, but intuitively, they are dual to sufficient reasons. 
for example
%for a contrastive explanation is 
$\{ r_2, g_2\}^\omega$: only avoiding all occurrences of $r_1$ would avoid $\LTLeventually g_1$. %\footnote{Recall that contrastive explanations are specific traces and not sets of events.}

\section{Definitions}
%\begin{enumerate}[nosep]
  %  \item (a) Find a set of events in $\pi$ which serve as a (minimal) sufficient reason for the fact that $\pi\models\varphi$; or (b) Given such set of events, verify that it is a (minimal) sufficient reason.  
   % \item Find or verify a symbolic representation as a trace property, for a (minimal) sufficient reason for the fact that $\pi$ satisfies $\varphi$. 
    %\item Similarly, find or verify (symbolic) contrastive explanations for the fact that $\pi$ satisfies $\varphi$. 
%\end{enumerate}
%In the following, 
We define specific traces and structures that we use throughout the paper. The motivations for these constructs appear when we use them, later on in the paper. %(one can also skip this, and come back to it when the corresponding definition is needed).  
\begin{definition}[Trace representation]
\label{def:tracerepresentation}
    Let $R\subset \mathbb{N}\times \ip$ and let $\R \notin \ap$.
    We represent $R$ as a trace $\tau_R$ over the alphabet $(2^{\{\R\}\cup\ip})^\omega$ as follows:
    $\forall t\in\nat$: $\tau_R[t] = \{\R \} \cup \{ a~|~\tuple{t,a} \in R   \}$.\footnote{We use the atomic proposition $\R$ to indicate that $\tau_R$ is used as a trace representation, and not as a system trace. We elaborate on this in \Cref{def:augT}. }  That is, $R = \{\tuple{t,a}~|~ a\in \tau_R[t] \}$, and $\forall t\in\nat: \R\in \tau_R[t]$. We call $\tau_R$ the \emph{trace representation} of $R$. 
If $R$ is ultimately periodic then $\tau_R$ is an ultimately periodic trace.
    %
    %From now on, we assume that the set of inputs is $\ip \cup \{\R, \on \}$.
\end{definition}
\begin{example}
    Let $\ip = \{ i_1, i_2\}$. The trace representation of
    $R_1 = \{ \tuple{0, i_1}, \tuple{ 1, i_2}\}$ is $\tau_{R_1} = \{ \R, i_1\}\set{\R, i_2}\set{\R}^\omega $. For $R_2 = \{ \tuple{t, i_1}| t\%2\! =\!0 \}$ we have $\tau_{R_2} = (\set{\R,i_1}\set{\R})^\omega$.
\end{example}

\begin{definition}[Augmented transition system]\label{def:augT}
        %old definition in the main_old file 
     Let $T = (Q, q_0, \delta, \ip, \op)$ be a transition system over $\ap = \ip \cup \op$. We define the \emph{augmented transition system} $T_\aug$ over inputs $\ip' = \ip \cup \{ \R\}$ and outputs $\op$ %of atomic propositions, 
     to be  
     $ T_\aug = (Q\cup \{q_\R \}, q_0, \delta', \ip', \op )$
    % \begin{align*}
     %   T_\aug = (Q\cup \{q_\R \}, q_0, \delta', \ip', \op )
    %\end{align*}
        for $q_\R\notin Q$, where:
        \begin{itemize}[nosep]
            \item    $\forall q\in Q, A\in 2^\ip: \delta'(q, A) = \delta(q,A)$.
            %\item  $L'(q_{ \on}) = \{\R, \on\}, L'(q_{ \neg\on}) = \{\R\}$.
            %\item $\delta'  = \delta \cup \{ (q_\on, q_\on), (q_{\neg\on}, q_{\neg\on}), (q_\on, q_{\neg\on}), (q_{\neg\on}, q_\on) \}$
            \item  $\forall A\in 2^\ip$:
            $\delta' (q_0, \{\R\}\cup A)  = (q_\R, \emptyset)= \delta' (q_\R, \{\R\}\cup A)$.
            %  \item $ \forall A\in 2^\ip$: $\delta' (q_\R, \{\R\}\cup A) = (q_\R, \emptyset)$.
        \end{itemize}
   
\end{definition}
  That is, 
  given a system $T$, $T_\aug$
  %the augmented transition system of a system $T$ 
  contains all traces of $T$, and, in addition, it contains all traces $\tau \in (2^{\{\R\}\cup \ip})^\omega$ that can serve as a trace representation of some set $R$. 
Although we add exponentially many transitions in the size of $\ip$, the size of $T_\aug$ is linear in the size of $T$, since the alphabet of~$T$ is $2^{\ip\cup \op}$.

\begin{definition}[Projection and Extension of Büchi automata]\label{def:projection}\label{def:extension}
Let $ \aut = (2^\ap, Q, q_0, \delta, F)$ be an NBA. 

The \emph{projection} of $ \aut$ on a subset $\ap' \subseteq \ap$ is $\aut_{\downarrow_{\ap'}} = (2^{\ap'}, Q, q_0, \delta', F)$ where $\delta' = \{  (q, A, q') ~|~ A\in 2^{\ap'} \text{ and }
    \exists B\in 2^{\ap \setminus \ap'}: (q, A\cup B, q')\in \delta  \}$.

    The \emph{extension}  of $ \aut$ to a superset $\ap'' \supseteq \ap$ is $\aut_{\uparrow_{\ap''}} = (2^{\ap''}, Q, q_0, \delta'', F)$ where $\delta'' = \{  (q, A\cup B, q') ~|~ A\in 2^{\ap},B \in 2^{\ap'' \setminus \ap}, 
     (q, A, q')\in \delta  \}$.
\end{definition}

\subsubsection*{Problem Definition}
Given a transition system $T$, an LTL specification $\varphi$, which we call the \textbf{effect}, and an ultimately periodic trace $\pi\in\TR(T)$ with $\pi \models \varphi$,\footnote{In \Cref{sec: example}, we looked for a reason for a violation of an LTL property. For simplicity, from now on, we assume that $\pi\vDash \varphi$. If we are looking for a reason for $\pi\nvDash \varphi$, we can simply look for a reason for the formula $\varphi' = \neg \varphi$, which $\pi$ satisfies.} we are interested in the following:
\begin{itemize}[nosep]
    \item \textbf{Reason type}: We study reasons first as sets of events on the trace, and then via symbolic representation as trace properties.
    \item \textbf{Verification \& synthesis:} For each reason type, we are interested in the \emph{verification} problem: whether a suspected reason is indeed a reason; and in the \emph{synthesis} problem: finding a (minimal) reason. 
\end{itemize} 
  
\section{Non-Temporal Sufficient Reasons}\label{sec:nontemporal}
%As described in Section~\ref{sec:prelin:suff}, 
A sufficient reason provides a set of elements that is sufficient to fix the output of a function. In this section, we 
lift this definition to express sufficient reasons for LTL effects in reactive systems. 
%\subsection{Sufficient Reasons for Reactive Systems}
%\paragraph{Sufficient Reasons over Traces}
We define reasons as (possibly infinite) sets of events that appear on the trace. In \Cref{sec:TemporalReasons} we lift the definition once more, to refer to \emph{temporal} sufficient reasons. 
In the following, we define reasons over the set $\ip$ of input propositions, while the effect $\varphi$ can be over all of $\ap$, to express relations between inputs and outputs. One can easily tweak the definition to take into account other propositions either in the reason or in the effect.

\begin{definition}[Sufficient Reasons for LTL Effects]\label{def:SR}
Let $T$ be a transition system,
 $\pi \in \TR(T)$, $\varphi$ an LTL formula such that $\pi\vDash\varphi$, and $R\subseteq \mathbb{N}\times \ip$.
   \begin{itemize}[nosep]
       \item We say that $R$ is a \emph{sufficient reason} (SR) for 
       $\pi \vDash \varphi$ 
       %$f_\varphi(\pi) = b$ (where $b\in \{0,1\}$) 
       in $T$, if for every trace $\tau \in \TR(T)$, if $\tau[t] =_{\{a\}} \pi[t]$ for all $\tuple{t,a}\in R$, then $\tau \vDash \varphi$. %\footnote{In general, $R$ can be empty. However, this would require that \emph{all} traces satisfy the effect, which is not usually the case.} 
       In that case, we say that $R$ is a \SR~for $\phiTpi$. 
   % \item We say that $R$ is a \emph{minimum sufficient reason}  for $\phiTpi$     if $R$ is a sufficient reason for $\phiTpi$ and there is no sufficient reason $R'$ with $|R'| < |R|$. This covers also  the case where all sufficient reasons are countable. 
    \item We say that $R$ is a \emph{minimal sufficient reason} (MSR)  for $\phiTpi$ 
     if $R$ is a \SR~for $\phiTpi$ and 
    there is no \SR~$R'$ such that $R' \subsetneq R$. 
   \end{itemize} 
\end{definition}

 Since we consider input-deterministic Mealy machines, a trace $\tau$ that agrees with $\pi$ on all inputs at all timepoints is only $\pi$ itself, and thus satisfies $\varphi$. Therefore we have: 
\begin{observation}\label{thm:NNSR}\label{obs:NNSR}
    Let $T$ be a transition system, let $\varphi$ be an LTL formula, and let $\pi \in \TR(T)$ such that $\pi\vDash\varphi$. Then, $R_\text{all} = \mathbb{N}\times \ip$ is a \SR~for $\phiTpi$. Moreover, we can finitely represent $R_\text{all}$ using the trace representation $\tau_{R_\text{all}} = (\{ \R\} \cup \ip)^\omega$.
\end{observation}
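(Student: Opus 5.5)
The argument unfolds Definition~\ref{def:SR} for $R=R_\text{all}$ and then uses input determinism of $T$; the second claim follows from Definition~\ref{def:tracerepresentation}.

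\textbf{Reducing to an input-equal trace.} Fix an arbitrary $\tau\in\TR(T)$ with $\tau[t]=_{\{a\}}\pi[t]$ for every $\tuple{t,a}\in\mathbb{N}\times\ip$. Then $\tau_{\downarrow_{\ip}}=\pi_{\downarrow_{\ip}}$, i.e., $\tau\equiv_{\ip}\pi$. It suffices to show $\tau=\pi$, since then $\tau\vDash\varphi$ follows from the hypothesis $\pi\vDash\varphi$.

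\textbf{Equality by induction.} Let $p_0p_1\cdots$ and $p'_0p'_1\cdots$ be paths inducing $\pi$ and $\tau$, with input/output pairs $(i_j,o_j)$ and $(i'_j,o'_j)$. Since $\ip\cap\op=\emptyset$, each letter determines its input part: $i_j=\pi[j]\cap\ip$ and $i'_j=\tau[j]\cap\ip$. By the previous step, $i_j=i'_j$ for all $j$. We show by induction on $j$ that $p_j=p'_j$ and $o_j=o'_j$.
\begin{itemize}[nosep]
\item The base case holds because both paths start at $q_0$.
\item If $p_j=p'_j$, then $\delta(p_j,i_j)=\delta(p'_j,i'_j)$. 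Since $\delta$ is a function (input determinism), $p_{j+1}=p'_{j+1}$ and $o_j=o'_j$.
\end{itemize}
Hence $\pi[j]=i_j\cup o_j=\tau[j]$ for all $j$, so $\tau=\pi$. Thus $R_\text{all}$ is a \SR~for $\phiTpi$.

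\textbf{Representation.} Apply Definition~\ref{def:tracerepresentation}: for every $t$, $\tau_{R_\text{all}}[t]=\{\R\}\cup\{a\mid\tuple{t,a}\in R_\text{all}\}=\{\R\}\cup\ip$. This gives the ultimately periodic trace $(\{\R\}\cup\ip)^\omega$.

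No step is a real obstacle. The only points needing care are recovering the inputs from each letter via $\ip\cap\op=\emptyset$, and noting that distinct paths cannot yield the same inputs with different outputs because $\delta$ is a function.
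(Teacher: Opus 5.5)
Your proof is correct and follows the paper's route. The paper justifies the observation in one line: by input determinism, the only trace of $T$ that agrees with $\pi$ on all inputs at all timepoints is $\pi$ itself, and the representation is read off the definition of trace representation. Your induction over the two inducing paths just spells out that input-determinism step carefully.
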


Note that the definition of \MSRs~allows us to compare reasons of countable size. However, we might have an infinite chain $R_1\! \supsetneq \!R_2 \!\supsetneq \!R_3 \!\supsetneq\!\cdots$ with no \MSR.

\begin{example}\label{ex:infinite}
    Consider the transition system of \Cref{fig:motivation}, the
    trace $\pi = \{r_1, g_1\}^\omega$ and the effect $\varphi = \LTLglobally \LTLeventually g_1$ which states that $g_1$ needs to appear infinitely often on a trace. We have that $\pi\vDash \varphi$ and $R = \mathbb{N}\times \ip$ serves as a \SR~(\Cref{obs:NNSR}). This is true for every $R' $ with infinitely many $g_1$'s. Thus, we can create an infinite chain of \SRs: $\mathbb{N} \times\set{i_1}\supsetneq\mathbb{N}_{>1}\times\set{i_1}\supsetneq\mathbb{N}_{>2} \times\set{i_1}\supsetneq \cdots $,
    where
    %, we use $[n]$ to denote the set $\{ 1, 2, \cdots, n\}$, and 
    $\mathbb{N}_{>n}=\{ m \in \mathbb{N} ~|~ m>n \}$.
    Since for every $j$ we have $\mathbb{N}_{>j} \supsetneq \mathbb{N}_{>j+1}$, there is no \MSR. 

%    As for minimum sufficient reasons, all sufficient reasons are countable, thus of the same power, therefore all sufficient reasons are minimum sufficient reasons. 
\end{example}

\subsection{Logical Encoding of Sufficient Reasons}\label{sec: Sufficient Reasons as Hyperproperties}
In this section, we show how to encode (M)\SRs~using HyperLTL. In \Cref{sec: complexities} we use this encoding to provide an algorithmic approach for the verification and synthesis of (M)SRs, via HyperLTL model checking. 

The set $R$ is a \SR~for $\phiTpi$ 
%for the fact that $\varphi$ holds on a trace $\pi$ in $T$, 
if the following formula holds:
\begin{align*} %\label{eq:reasonR}
   f[R, \pi]:=
    \forall \pi'\in\TR(T). \big( \forall \tuple{t,a}\in R. ~\pi'[t] =_{\{a\}} \pi[t] \big) \rightarrow \varphi_{\pi'}
\end{align*}
%However, the formula $f[R, \pi]$ is not a proper HyperLTL formula, as in HyperLTL we cannot quantify over time points. To overcome this, 
%
We can express $f[R, \pi]$ in HyperLTL using 
%we can use the auxiliary trace quantification proposed by HyperQPTL. 
%we use
$\tau_R$, the trace representation of $R$:
%(cf.~\Cref{def:tracerepresentation}): 
%and the additional proposition $\on$ to indicate the indices of $R$.  Intuitively, $R$ contains all indices $i$ on which $\on_{\tau_R}$ holds.
\begin{align*} %\label{eq:reasonTau}
 \psi_{\text{\SR}}[\tau_R, \pi]: =  \R_{\tau_R} \wedge  \forall \pi'. \Big( \neg \R_{\pi'} \wedge 
    \LTLglobally   \bigwedge_{a\in \ip}
    \big(   
    a_{\tau_R} \rightarrow (\pi =_{\{a\}} \pi')
    \big) \Big)
     \rightarrow \varphi_{\pi'}
\end{align*}
%
%The formula $\psi_{sr}[\tau_R, \pi]$ states that 
Meaning, if $\pi'$ agrees with $\pi$ on all inputs indicated by $R$, in the corresponding timesteps, %$\on$ is true on $\tau_R$, 
then $\pi'$ has to satisfy $\varphi$.  
HyperLTL implicitly quantifies over system traces so we don't need to add the fact that $\pi'$ is in $\TR(T)$ explicitly. Instead, we use $\neg\R_{\pi'}$ to indicate that it is not a trace representation, therefore, indeed one of the system traces. 

Note that if $a_{\tau_R}$ holds at timestep $t$, it does not mean that $a$ needs to hold on $\pi'$ at that timestep, but that $\pi$ and $\pi'$ should agree on the value of $a$: either ($a_\pi$ and $a_{\pi'}$) or  ($\neg a_\pi$ and $\neg a_{\pi'}$) at time $t$.

\begin{lemma}\label{lemma:RiffTau}
Let $R \subseteq \mathbb{N}\times \ip$ and $\tau_R $
be its trace representation. 
%, such that: $R = \{ i\in \mathbb{N}~|~ \on\in\tau_R[i]  \}$. 
Then, $f[R, \pi] $ 
holds in $T$
iff $ \psi_{\text{\SR}}[\tau_R, \pi]$ holds in~$T_\aug$. 
%\Cref{eq:reasonR} holds with respect to $\tuple{T, \pi}$ iff \Cref{eq:reasonTau} holds with respect to $\tuple{T_\aug, \pi}$. 
\end{lemma}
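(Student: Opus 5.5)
The plan is to prove both directions by establishing a bijection between the traces quantified over in $f[R,\pi]$ and those quantified over in $\psi_{\text{\SR}}[\tau_R,\pi]$. We interpret $\psi_{\text{\SR}}[\tau_R,\pi]$ with $\tau_R$ and $\pi$ as free variables assigned to the traces themselves. This requires checking that both assignments are legal in $T_\aug$. For $\pi\in\TR(T)$: its path uses only transitions of $\delta$, which $\delta'$ preserves, so $\pi\in\TR(T_\aug)$. For $\tau_R$: every letter contains $\R$, so the run moves from $q_0$ to $q_\R$ and stays there with output $\emptyset$. Hence $\tau_R\in\TR(T_\aug)$, and $\R_{\tau_R}$ holds at time $0$, discharging the first conjunct.

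The first key step is to characterize $\TR(T_\aug)$. I will show that a trace of $T_\aug$ either never contains $\R$ or always contains $\R$, and that the traces with $\neg\R$ at position $0$ are exactly $\TR(T)$. The argument is a case split on the first letter. If $\R\notin\pi'[0]$, the run leaves $q_0$ via an original $\delta$-transition into $Q$. From states of $Q$ only inputs in $2^{\ip}$ are enabled, and no transition of $\delta'$ re-enters $q_\R$ from $Q$, so the whole run is a run of $T$. Conversely, every $T$-trace is such a trace. If $\R\in\pi'[0]$, the run is in $q_\R$ from then on. Thus the guard $\neg\R_{\pi'}$, read at time $0$, restricts the universal quantifier exactly to $\TR(T)$.

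The second step is to match the antecedents. By \Cref{def:tracerepresentation}, $a\in\tau_R[t]$ iff $\tuple{t,a}\in R$. Therefore $\LTLglobally\bigwedge_{a\in\ip}(a_{\tau_R}\rightarrow \pi=_{\{a\}}\pi')$ holds iff for every $t$ and every $a$ with $\tuple{t,a}\in R$ we have $\pi'[t]=_{\{a\}}\pi[t]$, which is the antecedent of $f[R,\pi]$. The consequent $\varphi_{\pi'}$ holds under the assignment iff $\pi'\vDash\varphi$. This uses that $\varphi$ mentions only $\ap$, so the projection convention makes $\R$ irrelevant. Combining these two steps, the universally quantified implications coincide trace by trace over $\TR(T)$. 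The traces with $\R_{\pi'}$ satisfy the implication vacuously, so both directions follow.

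The main obstacle is the first step: ruling out mixed traces in $T_\aug$ that would sneak past the guard $\neg\R_{\pi'}$ evaluated only at time $0$. The definition of $\delta'$ is stated only for $Q\times 2^{\ip}$ and for $q_\R$ or $q_0$ with $\R$. I will argue that the transition function on inputs containing $\R$ at states of $Q\setminus\{q_0\}$ is either undefined or, under input-determinism, must be completed. If it is completed, I will note that it should send these inputs to a non-accepting sink, or equivalently I would read the guard as $\LTLglobally\neg\R_{\pi'}$. I will make this convention explicit, since it is the one point where the argument depends on how \Cref{def:augT} is read.
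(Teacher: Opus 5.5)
Your plan (check that the assignments to $\tau_R$ and $\pi$ are legal, characterize $\TR(T_\aug)$, then match the antecedent with $f[R,\pi]$) is the same as the paper's, and your second step is fine. The gap is in the first step. Your claim ``from states of $Q$ only inputs in $2^{\ip}$ are enabled'' is false for $q_0$ itself. \Cref{def:augT} gives $\delta'(q_0,\{\R\}\cup A)=(q_\R,\emptyset)$, and this transition is available at \emph{every} visit to $q_0$, not only at time $0$. So whenever a run of $T$ returns to $q_0$, the trace can switch into $q_\R$ midway. The guard $\neg\R_{\pi'}$ is evaluated only at position $0$, so it does not exclude such mixed traces. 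The states in $Q\setminus\{q_0\}$ that you worried about are harmless: the $\R$-transitions are simply absent there, so those runs block. The leak is at $q_0$, and your ``non-accepting sink'' completion does not touch it.

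This matters: under the literal definitions the ``only if'' direction is false. Take the system of \Cref{fig:motivation}, where $p_0$ has self-loops, with $\pi=\{\}\{r_1,g_1\}^\omega$, $\varphi=\LTLeventually g_1$ and $R=\{\tuple{1,r_1}\}$.
\begin{itemize}
\item Every trace of $T$ with $r_1$ at time $1$ outputs $g_1$ at time $1$, so $f[R,\pi]$ holds in $T$.
\item The trace $\pi'=\{\}\{\R,r_1\}^\omega$ is in $\TR(T_\aug)$: it takes the self-loop at $p_0$, then the $\R$-edge out of $q_0$.
\item $\pi'$ satisfies $\neg\R_{\pi'}$ and agrees with $\pi$ on $r_1$ at time $1$, yet never has $g_1$. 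So $\psi_{\text{\SR}}[\tau_R,\pi]$ fails in $T_\aug$.
\end{itemize}
The paper's own proof asserts $\neg\R_{\pi'}\leftrightarrow\pi'\in\TR(T)$ without justification, so it has the same hole. Only the ``if'' direction survives as stated, since every $T$-trace is a $T_\aug$-trace satisfying the time-$0$ guard.

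Your fallback is the right repair, but it is not ``equivalent'' to the sink completion; it should be your main line. With the guard $\LTLglobally\neg\R_{\pi'}$, the traces passing the guard are exactly those that never read $\R$. Such traces never enter $q_\R$ and use only $\delta$-transitions inside $Q$, so they are exactly $\TR(T)$. The rest of your argument then goes through. Alternatively, change $T_\aug$ so that the branch into $q_\R$ leaves from a fresh initial state that is never re-entered.
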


The proof is straightforward from the definition of 
$f[R, \pi] $, $ \psi_{\text{\SR}}[\tau_R, \pi]$ and $T_\aug$. %We provide it in the full version of this paper for completeness.
We provide a formal proof in Appendix~\ref{app:proofs} for completeness. 
Given \Cref{lemma:RiffTau}, we can use $ \psi_{\text{\SR}}[\tau_R, \pi]$ to verify or find sufficient reasons, as we show in the following. 

\subsubsection{Minimal Sufficient Reasons}\label{sec: minimal sufficient reasons}
To express the fact that $R$ is a \MSR, we only need to state that there is no \SR~$R' \subsetneq R$:
\begin{align}
     \psi_{\text{\MSR}}[\tau_R, \pi]: =  \psi_{\text{\SR}}[\tau_R, \pi] ~\wedge
    \Big( \forall \tau_{R'}. ~  \psi_{\text{\SR}}[\tau_{R'}, \pi] \rightarrow \big( (\tau_R \equiv \tau_{R'}) ~\vee \\ 
    %\neg\LTLglobally \big((\tau_{R'} = \pi) \rightarrow (\tau_R = \pi)\big)\big) \Big) 
      \neg\LTLglobally \bigwedge_{a\in\ip}\big(a_{\tau_{R'}} \rightarrow a_{\tau_R} \big)\big) \Big)    \label{eq:subsumed}
\end{align}
 \Cref{eq:subsumed} states that $R'$ is not subsumed by $R$, for any \SR~$R' \neq R$. 
%\nadav{We can write instead of 7:  $\exists i. \big( \tau_{R'}[i] \vDash  \text{on} \wedge \tau_R[i] \vDash \neg \text{on} \big)$}

\begin{example}

 Consider $\varphi = \LTLeventually \big(r_1\wedge \LTLnext (r_2\LTLuntil r_1)\big)$, the system of~\Cref{fig:motivation} and
    \begin{align*}
        \pi = \{r_1, g_1\}\{r_2\}\{r_1, g_1\}\{\}\{r_1, g_1\}\{r_2\}\{r_2\}\{r_1,g_1\}
        \{\}  \{r_1,g_1\}\{r_2\}\{r_2\}\{r_2\}\{r_1,g_1\}\{\}\ldots
        \end{align*}
    %Consider $\pi$, $T$, and $\varphi$ from \Cref{ex:min}. 
    There are infinitely many \MSRs, e.g. $  R_0 = \{ \tuple{0, r_1}, \tuple{1, r_2} ,\tuple{2,r_1}\}$, $ R_1 = \{ \tuple{4, r_1}, $ $\tuple{5, r_2} ,\tuple{6, r_2} ,\tuple{7,r_1}\}$, $ R_2 =  \{ \tuple{9, r_1}, \tuple{10, r_2} ,\tuple{11, r_2}, \tuple{12, r_2} ,\tuple{13,r_1}\} $,
    %
    %$R_0 = \{0,1,2\}, R_1 = \{4,5,6,7\}, R_2 = \{9,10,11,12,13\}$ 
    and so on. For each of these reasons $R_i$, any subset $R_i' \subsetneq R_i$ would not qualify as a SR. 
\end{example}

%We now show that if there is a (minimal) sufficient reason, then there exists a (minimal) sufficient reason that is \textbf{finitely representable} via an ultimately periodic trace. In particular, it means that we can always find an ultimately periodic trace that represents a (minimal) sufficient reason.  
We now show that we can restrict the search and verification of (M)\SRs~to only those that are finitely representable: 

\begin{theorem}\label{thm:omega:reg:reason}
    Let $\pi$ be an ultimately periodic trace, $T$ be a transition system, and $\varphi$ be an LTL property. Then, the following statements are equivalent: 
    \begin{enumerate}[nosep]
        \item Exists a (M)\SR~$R$ for $\phiTpi$.
        \item Exists an \emph{ultimately periodic} (M)\SR~$R'$ for $\phiTpi$.
        \item Exists an ultimately periodic trace representation $\tau_R$ of $R$ in 
        the augmented system $T_\aug$. \label{item:3}
    \end{enumerate}
\end{theorem}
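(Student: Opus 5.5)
The plan is to prove the cycle $1 \Rightarrow 2 \Rightarrow 3 \Rightarrow 1$. Item~\ref{item:3} is read as: there is an ultimately periodic trace $\tau_R$ of $T_\aug$ that satisfies $\psi_{\text{\SR}}[\tau_R,\pi]$ (respectively $\psi_{\text{\MSR}}[\tau_R,\pi]$).

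The implication $2 \Rightarrow 3$ follows from \Cref{def:tracerepresentation} and \Cref{def:augT}. If $R'$ is ultimately periodic, then $\tau_{R'}$ is an ultimately periodic trace, and it labels every position with $\R$. Such a trace is generated in $T_\aug$ by looping in $q_\R$, so it belongs to $\TR(T_\aug)$. By \Cref{lemma:RiffTau}, $\psi_{\text{\SR}}[\tau_{R'},\pi]$ holds.

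For the implication $3 \Rightarrow 1$, every $\tau$ carrying $\R$ everywhere is $\tau_R$ for the unique set $R = \{\tuple{t,a} \mid a \in \tau[t]\}$. \Cref{lemma:RiffTau} then gives that $R$ is an \SR. For the \MSR\ case, I would add a small extension of that lemma. Since the quantified $\tau_{R'}$ range over trace representations, \Cref{eq:subsumed} says exactly that no \SR\ $R' \subsetneq R$ exists, because $\LTLglobally \bigwedge_a (a_{\tau_{R'}} \rightarrow a_{\tau_R})$ encodes $R' \subseteq R$. One caveat: the quantifier $\forall \tau_{R'}$ also ranges over genuine system traces. For those, $\psi_{\text{\SR}}$ is false because its conjunct $\R_{\tau_{R'}}$ fails, so they impose no constraint.

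The substantive direction is $1 \Rightarrow 2$. For plain \SRs\ it is immediate from \Cref{obs:NNSR}, since $R_\text{all}$ is ultimately periodic. For \MSRs, the plan is to show that the set
\[
L_{\text{\MSR}} = \{\tau_R \mid R \text{ is an \MSR\ for } \phiTpi\}
\]
is $\omega$-regular. A nonempty $\omega$-regular language contains an ultimately periodic word $u v^\omega$, obtained from a lasso through an accepting state of its NBA, and that word is $\tau_{R'}$ for an ultimately periodic $R'$. I would build the automaton in three steps, using that $\pi$ is ultimately periodic and hence recognized by a small deterministic automaton that can be hard-wired into products.

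\begin{enumerate}
\item Let $B$ be the set of pairs $(\tau,\pi')$ with $\pi' \in \TR(T)$, $\pi'$ agreeing with $\pi$ on $a$ at every time $t$ where $a \in \tau[t]$, and $\pi' \vDash \neg\varphi$. This set is $\omega$-regular, as the product of $T$, $\aut_{\neg\varphi}$, and the automaton for $\pi$ with a local agreement check.
\item The set of representations of \SRs\ is $L_{\text{\SR}} = \mathcal{L}(\R\text{-everywhere}) \setminus \big(B_{\downarrow \tau}\big)$. It is $\omega$-regular by projection (\Cref{def:projection}) and Büchi complementation.
\item Let $C$ be the set of pairs $(\tau,\tau')$ with $\tau' \in L_{\text{\SR}}$, $\tau' \neq \tau$, and $\tau' \subseteq \tau$ pointwise; it is $\omega$-regular. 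Then $L_{\text{\MSR}} = L_{\text{\SR}} \setminus C_{\downarrow \tau}$, again by projection, extension (\Cref{def:extension}), intersection, and complementation.
\end{enumerate}

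I expect the main obstacle to be doing this cleanly. One must argue carefully that the automaton-level operations faithfully mirror the two nested quantifier blocks of $\psi_{\text{\MSR}}$. In particular, the inner universal quantifier over $\pi'$ sits inside $\psi_{\text{\SR}}[\tau_{R'},\pi]$, which itself occurs under $\forall \tau_{R'}$. This is why two rounds of complementation are needed, rather than a direct appeal to a HyperLTL model-checking result. The hypothesis that some \MSR\ exists is what guarantees $L_{\text{\MSR}} \neq \emptyset$, since \Cref{ex:infinite} shows this can fail.
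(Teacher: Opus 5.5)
Your proof is correct, and its core is the paper's argument. With the ultimately periodic $\pi$ hard-wired, the set of representations $\tau_R$ of (M)\SRs\ is $\omega$-regular. If that set is nonempty, it therefore contains a lasso word $uv^\omega$, which represents an ultimately periodic $R'$.

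The paper obtains the $\omega$-regularity in one line. It views $\tau_R$ as the witness for $\exists\tau_R.\,\psi_{\text{\MSR}}[\tau_R,\pi]$ over $T_\aug$ and cites the automata-theoretic HyperLTL model-checking algorithm, whose final step is an NBA emptiness check. You unfold that black box into explicit products, projections and two complementations over $\TR(T)$.

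Your remark that this is needed \emph{rather than} a direct appeal to HyperLTL model checking is therefore not accurate. Once $\psi_{\text{\MSR}}$ is put into prenex $\exists\forall\exists$ form (as the paper notes when discussing complexity), that algorithm performs exactly these alternating projection and complementation steps. The automaton it builds for the outermost $\exists\tau_R$ recognizes, up to encoding, your $L_{\text{\MSR}}$.

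What your version buys is self-containedness. It also gives a trivial treatment of the plain \SR\ case via $R_\text{all}$, which the paper does not single out, and an explicit $3 \Rightarrow 1$ direction, which the paper leaves implicit.

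In that direction, add the one-line reason a witness of $\psi_{\text{\SR}}$ carries $\R$ at \emph{every} position. The conjunct $\R_{\tau_R}$ only constrains position $0$. But in $T_\aug$, reading $\R$ at $q_0$ moves the run to $q_\R$, which only reads letters containing $\R$ and outputs $\emptyset$. So the witness really has the shape of some $\tau_R$.
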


\begin{proof} 
If there is a \SR~$R$, then due to \Cref{lemma:RiffTau}, 
 the formula $ \psi_{\text{\SR}}[\tau_R, \pi]$  holds in $T_\aug$
(and, similarly $   \psi_{\text{\MSR}}[\tau_R, \pi]$ for \MSRs). 
%, and the trace representation of $R$,$\tau_R$, is the witness of the respective hyperLTL formulas. 
That is, $\tau_R$ is the witness for the satisfaction of the HyperLTL formula $\exists \tau_R.  \psi_{\text{\SR}}[\tau_R, \pi] $. 
We use the fact that $\pi$ is ultimately periodic to explicitly encode it within the HyperLTL model-checking instance. 
If such a witness $\tau_R$ exists, there is also an ultimately periodic witness $\tau_{R'}$, since we find it using an emptiness check for the respective NBA~\cite{ClarksonFKMRS14}.
$\tau_R'$ corresponds to an ultimately periodic set $R'$ which also serves as a  (M)\SR. \qed
\end{proof}

\subsection{Complexities and Algorithms of \SRs~Verification and Synthesis}\label{sec: complexities}

In the following, we discuss the complexities of the verification and synthesis of \SRs, using
the HyperLTL encodings of \Cref{sec: Sufficient Reasons as Hyperproperties}.
We note that the upper-bounds also provide us with an algorithmic approach (which we implemented in our prototype) for the verification and synthesis of reasons. 
\subsubsection{Verification of Sufficient Reasons}

\begin{theorem}\label{thm:SRcomplexities}\label{thm:SRpspacecomplete}
     Given an LTL property $\varphi$, an ultimately periodic set $R$, an ultimately periodic trace $\pi$, and a system $T$, %, we have the following:
 %    \begin{enumerate}[nosep]
         %\item 
         the problem of verifying whether $R$ is a \SR~for $\phiTpi$ is PSPACE-complete. 
\end{theorem}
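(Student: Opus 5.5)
We prove the two bounds separately. For membership in PSPACE we avoid general HyperLTL model checking, since the formula $\psi_{\text{\SR}}[\tau_R,\pi]$ has a single universal quantifier once $\tau_R$ and $\pi$ are fixed. Because $R$ and $\pi$ are ultimately periodic, both are given as lassos $u\cdot v^\omega$. We build a deterministic lasso-shaped automaton $\mathcal{C}$ whose states are pairs of positions in the two lassos, polynomial in the input size. $\mathcal{C}$ reads a trace $\tau'$ of $T$ and checks, at every step $t$, that $\tau'[t]=_{\{a\}}\pi[t]$ for every $a$ with $\tuple{t,a}\in R$. Then $R$ is a \SR~for $\phiTpi$ iff the product $T\times\mathcal{C}\times\aut_{\neg\varphi}$ has empty language. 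Here $T$ is viewed as an NBA, $\mathcal{C}$ is extended to $2^\ap$ as in \Cref{def:extension}, and $\aut_{\neg\varphi}$ is the exponential-size NBA for $\neg\varphi$ from~\cite{DBLP:conf/lics/VardiW86}. We check emptiness on the fly, as in standard LTL model checking: we guess a reachable accepting lasso in the product, storing only a polynomial-size state of $\aut_{\neg\varphi}$ (a set of subformulas) together with a state of $T$ and a pair of lasso positions. This gives NPSPACE, which equals PSPACE by Savitch. Equivalently, we reduce to LTL model checking of $\varphi$ against the system $T\times\mathcal{C}$, which has polynomial size, so correctness follows directly from \Cref{def:SR}.

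For PSPACE-hardness we reduce from LTL model checking, which is PSPACE-hard already for a fixed small system (e.g.\ the universal system), or equivalently from LTL validity. Given an LTL formula $\varphi$ over $\ip$, let $T$ be the one-state Mealy machine that accepts every input and outputs $\emptyset$, so $\TR(T)=(2^\ip)^\omega$ up to empty outputs. Take $R=\emptyset$, which is trivially ultimately periodic. With $R=\emptyset$, the sufficient-reason condition says that every trace of $T$ satisfies the effect, that is, $T\vDash\varphi$.

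The main obstacle is the side condition that $\pi\vDash\varphi$ must hold. We handle it by a standard padding trick. Pick a fresh input $b$ and use the effect $\varphi' = b \vee \varphi$ together with the trace $\pi=\{b\}^\omega$. Clearly $\pi\vDash\varphi'$. With $R=\emptyset$, the set $R$ is a \SR~for $\tuple{\varphi',T,\pi}$ iff every trace satisfies $b\vee\varphi$. Since $b$ is fresh and occurs nowhere in $\varphi$, this holds iff $\varphi$ is valid: any counterexample to $\varphi$ can be modified to have $\neg b$ at time $0$ without changing whether $\varphi$ holds. The reduction is polynomial, which establishes hardness and, with the upper bound, completeness.
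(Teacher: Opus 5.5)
Your proof is correct. It follows the paper's overall skeleton but differs in execution, mostly in the hardness gadget.

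\textbf{Membership.} The paper fixes $\pi$ and $\tau_R$, reads $\psi_{\text{\SR}}[\tau_R,\pi]$ as an LTL formula over $\pi'$, and model-checks it on the augmented system $T_\aug$. It says only that the two ultimately periodic traces can be encoded as part of the instance. Your deterministic lasso monitor $\mathcal{C}$, followed by model-checking $\varphi$ on $T\times\mathcal{C}$, is that step made explicit. It also dispenses with the HyperLTL encoding and with $T_\aug$ altogether.

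\textbf{Hardness.} Both proofs use \Cref{lemma:SRemptyset} ($R=\emptyset$ is a \SR~iff every trace satisfies the effect), together with a fresh proposition that makes the given $\pi$ satisfy the effect. The paper reduces from model checking an arbitrary $T$. It adds a sink state $q_\fresh$ reachable only on input $\{\fresh\}$ and uses the disjunct $\LTLglobally\fresh$, which only $\pi_\fresh$ can satisfy. You instead reduce from validity, i.e.\ model checking a fixed universal machine. There the much weaker disjunct $b$ suffices, because universality lets you turn $b$ off at time $0$ in any counterexample while staying inside $\TR(T)$.

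\textbf{What each route buys.} Yours gives hardness already for a fixed system. Your $T$ is also genuinely total and input-deterministic. The paper's $T'$ specifies the fresh input only on exactly $\{\fresh\}$, and its identity $\TR(T')=\TR(T)\cup\{\pi_\fresh\}$ relies on the remaining transitions being absent. The paper's route, in turn, transfers hardness directly from model checking with arbitrary systems.

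\textbf{Two points to state explicitly.} First, $T$ must have input set $\ip\cup\{b\}$, so that $\pi=\{b\}^\omega\in\TR(T)$. Second, the size of a Mealy machine includes its $2^{|\ip|}$ transitions per state, as the paper notes for $T_\aug$. So your one-state universal machine has polynomial size only if the source problem is validity over a \emph{fixed} set of propositions. That restricted problem is still PSPACE-hard: the standard reduction encodes a fixed polynomial-space Turing machine cell by cell along the word, using a constant number of propositions. This is what your ``fixed small system'' remark has to mean.
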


To prove \Cref{thm:SRpspacecomplete} we first prove the following Lemma: 

\begin{lemma}\label{lemma:SRemptyset}
    Let $T$ be a transition system, let $\varphi$ be an LTL formula, and let $\pi\in\TR(T)$ such that $\pi\models \varphi$. Then,
    $R = \emptyset$ is a sufficient reason for $\phiTpi$ iff $\varphi$ holds on all traces $\pi'\in\TR(T)$. 
\end{lemma}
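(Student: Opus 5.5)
The plan is to unfold \Cref{def:SR} with $R=\emptyset$ and observe that the premise of the sufficiency condition becomes vacuous. I will prove the two directions separately. Each is a direct rewriting of the definition, so no construction or appeal to the HyperLTL encoding of \Cref{lemma:RiffTau} is needed.

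For the direction ($\Rightarrow$), assume $R=\emptyset$ is a \SR~for $\phiTpi$. Take an arbitrary $\pi'\in\TR(T)$. The condition ``$\pi'[t] =_{\{a\}} \pi[t]$ for all $\tuple{t,a}\in R$'' is a universal statement over the empty set, so it holds trivially. Hence \Cref{def:SR} gives $\pi'\vDash\varphi$. Since $\pi'$ was arbitrary, $\varphi$ holds on all traces of $T$, that is, $T\vDash\varphi$.

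For the direction ($\Leftarrow$), assume every $\pi'\in\TR(T)$ satisfies $\varphi$. To show $\emptyset$ is a \SR, we must show that every $\tau\in\TR(T)$ agreeing with $\pi$ on the events of $\emptyset$ satisfies $\varphi$. This holds because every $\tau\in\TR(T)$ satisfies $\varphi$ by assumption, regardless of the (vacuous) premise. The standing hypotheses $\pi\in\TR(T)$ and $\pi\vDash\varphi$ are only needed so that \Cref{def:SR} applies; they play no further role.

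There is no real obstacle here. The only point requiring care is to state the vacuous-truth step explicitly: the agreement premise over $R=\emptyset$ is satisfied by every trace, so sufficiency of $\emptyset$ collapses exactly to $T\vDash\varphi$. This equivalence is what the hardness part of \Cref{thm:SRpspacecomplete} will use to reduce LTL model checking to SR verification.
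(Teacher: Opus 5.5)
Your proof is correct and matches the paper's argument: with $R=\emptyset$ the agreement premise in the definition of a sufficient reason holds vacuously, so sufficiency reduces exactly to every trace of $T$ satisfying $\varphi$. For the converse the paper says that every $R$ is then a sufficient reason, in particular $\emptyset$; it also remarks that $\emptyset$ has the finite trace representation $\tau_\emptyset=\{\R\}^\omega$, which the equivalence itself does not need.
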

\begin{proof}
    If $R = \emptyset$, the condition of \Cref{def:SR} holds vacuously, thus, we require for all traces $\pi'$ to satisfy $\varphi$. %, that is, all traces have to satisfy $\varphi$. 
    In the other direction, if all traces satisfy $\varphi$, then every set $R$ constitutes a sufficient reason, and in particular $R=\emptyset$. 
    
   Note that $\emptyset$ is finitely representable with trace representation $\tau_\emptyset = \{\R\}^\omega$.\qed
\end{proof}

\begin{proof}[of \Cref{thm:SRpspacecomplete}]
%\begin{enumerate}
    %\item  \label{item:in:pspace}
    We reduce (via polynomial time and space reduction) the problem of verifying \SRs~to LTL model-checking, which is PSPACE-complete~\cite{DBLP:journals/jacm/SistlaC85,DBLP:conf/popl/LichtensteinP85}.
    
    {Membership in PSPACE}:
   % The formula 
   Since $\pi$ and $\tau_R$ are fixed, and $\pi'$ is the only remaining variable, and is universally quantified, we can interpret $\psi_{sr}[\tau_R, \pi]$ as an LTL formula over traces $\pi'$ (see LTL semantics in \Cref{sec: prelim}).
    Since $\pi$ and $\tau_R$ are ultimately periodic, we can encode them as part of the instance, resulting in at most a polynomial blow-up. In addition, system $T_\aug$ is linear in the size of $T$, and the formula $\psi_{sr}[\tau_R, \pi]$ adds a constant factor to the size of $\varphi$. 
   Then, we simply need to model-check if $\psi_{sr}[\tau_R, \pi]$ holds in $T_\aug$.

   PSPACE hardness: We show a reduction from LTL model-checking: Given a system $T$ and an LTL formula $\varphi$, we construct an instance: $T'$, $\pi$, $\varphi'$ and $R$ in polynomial time, such that $T\models \varphi$ iff $R$ is a \SR~for $\tuple{\varphi', T', \pi}$.
    \begin{itemize}[nosep]
        \item We construct $T'$ from $T$ by adding a fresh state $q_\fresh \notin Q$, and the transitions 
        $\delta(q_0, \{ \fresh\})=( q_\fresh, \emptyset)= \delta(q_\fresh, \{ \fresh\})$
        %$(q_0, \{ \fresh\}, q_\fresh), (q_\fresh, \{ \fresh\}, q_\fresh)$ 
        for a fresh input proposition $\fresh \notin \ap$. 
     We denote $\pi_\fresh = \{\fresh\}^\omega$, and
    we have that that $\TR(T') = \TR(T) \cup \{ \pi_\fresh \}$.
    \item  We define $\varphi' := \varphi \vee \LTLglobally \fresh$. Then,
          $\pi_\fresh \vDash \varphi'$, and $\forall \pi' \in \TR(T)$: $\pi'\nvDash \LTLglobally \fresh$, therefore
         $\forall \pi' \in \TR(T)$: $\pi'\vDash \varphi $ iff $\pi'\vDash \varphi' $.
    % \item We choose $\pi = \pi_\fresh$.
     \item We choose $\pi = \pi_\fresh$, and we choose $R = \emptyset$, so the induced trace is $\tau_R = \emptyset^\omega$. 
    \end{itemize}
    The construction of $T'$ is linear in $T$ (both in terms of time and space), and so is the construction of $\varphi'$. $\pi_\fresh$ and $R$ are constant, so all together we have a polynomial construction. 

    We are left to show that 
     $T\models \varphi$ iff $\emptyset$ is a \SR~for $\tuple{\varphi',T', \pi_\fresh}$. Due to \Cref{lemma:SRemptyset}, %since $\pi_\fresh \vDash \varphi'$, 
     we have that 
     $\emptyset$ is a \SR~for $\tuple{\varphi', T', \pi_\fresh}$ iff $\varphi'$ holds on all traces in $\TR(T')$. Since $\varphi'$ holds on $\pi_\fresh$ regardless of $T$ and $\varphi$, and from the construction of $T'$, we have that $\varphi'$ holds on all traces in $\TR(T')$ iff $\varphi$ holds on all traces in $\TR(T)$, as needed. \qed
\end{proof}

For \MSRs~verification, we can use HyperLTL model-checking to check whether $ \psi_{\text{\MSR}}[\tau_R, \pi]$ holds in $T_\aug$. This implies at most double-exponential upper bound, due to the quantifier alternation introduced in $ \psi_{\text{\MSR}}[\tau_R, \pi]$ (the universal quantifier on the left side of the implication becomes an existential quantifier when transforming to prenex normal form). For hardness,  we can use the same reduction as in the proof of \Cref{thm:SRpspacecomplete}: $R = \emptyset$ is a~\SR~iff it is also a \MSR. 

\begin{corollary}
    The problem of verifying whether $R$ is a \MSR~is PSPACE-hard. 
\end{corollary}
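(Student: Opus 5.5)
The plan is to reduce from LTL model-checking, reusing the construction from the proof of \Cref{thm:SRpspacecomplete} without change. Given $T$ and $\varphi$, I would build $T'$ by adding the fresh state $q_\fresh$ and the fresh input $\fresh$, set $\varphi' := \varphi \vee \LTLglobally \fresh$, take $\pi = \pi_\fresh = \{\fresh\}^\omega$, and take the candidate $R = \emptyset$ with trace representation $\tau_\emptyset = \{\R\}^\omega$. All of these objects are built in polynomial time, as argued earlier. Moreover, $\pi_\fresh \in \TR(T')$ and $\pi_\fresh \vDash \varphi'$, so the instance is well formed.

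The key step is to show that $T \vDash \varphi$ iff $\emptyset$ is an \MSR~for $\tuple{\varphi', T', \pi_\fresh}$. The \SR~part is already handled: by \Cref{lemma:SRemptyset} and the argument in the proof of \Cref{thm:SRpspacecomplete}, $\emptyset$ is a \SR~iff $\varphi'$ holds on all of $\TR(T')$, and this holds iff $T \vDash \varphi$. What remains is minimality, and it is immediate, since $\emptyset$ has no proper subset. Hence no \SR~$R' \subsetneq \emptyset$ can exist, and $\emptyset$ is an \MSR~iff it is a \SR. Chaining these equivalences gives $T \vDash \varphi$ iff $\emptyset$ is an \MSR.

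Since LTL model-checking is PSPACE-hard~\cite{DBLP:journals/jacm/SistlaC85}, verifying \MSRs~is PSPACE-hard as well. There is no real obstacle here. The only point to state explicitly is that minimality holds vacuously for $\emptyset$, so the reduction for \SRs~carries over verbatim.
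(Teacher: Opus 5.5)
Your proposal is correct and takes the same approach as the paper. It reuses the reduction from the proof of \Cref{thm:SRpspacecomplete} with $R=\emptyset$ and observes that, since $\emptyset$ has no proper subset, it is an \MSR~iff it is a \SR.
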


\subsubsection{Synthesis of Minimal Sufficient Reasons}

We now consider the question of \emph{finding} \SRs, rather than verifying their correctness. 
%\nadav{but what if the LTS is non detrministic, meaning with the same input we can get different traces which might not satisfy $\varphi$.}
%
%
      From \Cref{thm:NNSR} it follows that there always exists a \SR, and we can choose it to be $\mathbb{N}\times \ip$. Thus, the question of synthesizing an arbitrary \SR~becomes trivial. 
      Therefore, in the following, we are interested in synthesizing \emph{\MSRs}.
      %- we can always choose $R = \mathbb{N}$. 
  
\begin{theorem}\label{thm:SRexptime}
    We can reduce the synthesis problem of \MSRs~to the problem of HyperLTL model checking. In particular, the synthesis problem of \MSRs~is at most $3$EXPTIME in the size of $\varphi$. 
\end{theorem}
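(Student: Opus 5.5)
The plan is to reduce \MSR~synthesis to model checking the HyperLTL formula $\exists \tau_R.\ \psi_{\text{\MSR}}[\tau_R, \pi]$ on $T_\aug$. A witness $\tau_R$ for this formula is then the trace representation of an \MSR. By \Cref{thm:omega:reg:reason} we may restrict attention to ultimately periodic witnesses, and such a witness can be extracted from the emptiness check of the final automaton. First I fix $\pi$. Since $\pi$ is ultimately periodic, $\pi = u\cdot v^\omega$, I encode it directly into the instance. One option is to take the product of $T_\aug$ with a lasso automaton for $\pi$; the other is to replace every atom $a_\pi$ by a finite LTL description of the lasso. Either way the blow-up is polynomial, and the only remaining quantified variables are $\tau_R$, $\tau_{R'}$ and $\pi'$. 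I also add the conjunct $\R_{\tau_R}$ (and likewise for $\tau_{R'}$), so that witnesses range exactly over trace representations. By \Cref{lemma:RiffTau} and the construction of $T_\aug$, these correspond exactly to sets $R\subseteq \nat\times\ip$.

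Next I bring $\exists \tau_R.\ \psi_{\text{\MSR}}[\tau_R,\pi]$ into prenex form and count quantifier alternations. The outer conjunct $\psi_{\text{\SR}}[\tau_R,\pi]$ contributes a $\forall \pi'$. In the second conjunct, $\forall \tau_{R'}.\ \psi_{\text{\SR}}[\tau_{R'},\pi]\rightarrow(\ldots)$, the universal $\forall\pi''$ inside $\psi_{\text{\SR}}[\tau_{R'},\pi]$ sits on the left of an implication, so it turns into $\exists\pi''$. After renaming the variables apart, the prefix is
\[
\exists \tau_R\, \forall \tau_{R'}\, \exists \pi''\, \forall \pi'.
\]
This is an $\exists\forall\exists\forall$ prefix, i.e.\ alternation depth $3$. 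The standard automata-based HyperLTL model-checking algorithm~\cite{ClarksonFKMRS14} handles each alternation by a complementation, costing one exponent each. Starting from a Büchi automaton exponential in $|\varphi|$ and counting the alternations, this gives a tower of height $3$ in $|\varphi|$, i.e.\ 3EXPTIME. In the size of $T$ the cost is one exponent lower.

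Finally I argue correctness of the synthesis. By \Cref{lemma:RiffTau}, the formula holds with witness $\tau_R$ iff $R$ is a \SR~and no $R'\subsetneq R$ is a \SR. Hence an accepting lasso of the resulting NBA, projected onto the $\tau_R$ component, yields an ultimately periodic \MSR. Emptiness of the NBA means that no \MSR~exists, which can happen, as \Cref{ex:infinite} shows.

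The main obstacle is the precise accounting of alternations after $\pi$ has been fixed. I have to make sure that encoding the constant trace $\pi$ does not add a quantifier. I also have to check that the alternation depth, together with the usual convention of counting complexity relative to the formula size, gives exactly three exponents rather than four. If the naive count gives four, I would merge the innermost $\forall\pi'$ block with the automaton for $\varphi$, which is complemented only once. This absorbs one exponent into the initial LTL-to-NBA translation.
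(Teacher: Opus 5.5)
\textbf{The reduction matches the paper; the complexity bound does not follow.} Your reduction is the one the paper uses: model check $\exists\tau_R.\,\psi_{\text{\MSR}}[\tau_R,\pi]$, with $\pi$ encoded as a lasso. Doing this on $T_\aug$ rather than $T$ is, if anything, more precise. The gap is in the 3EXPTIME bound, which you yourself flagged as the main obstacle.

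\textbf{Why your count fails.} Your prenex form $\exists\tau_R\,\forall\tau_{R'}\,\exists\pi''\,\forall\pi'$ has three alternations. The standard construction costs one exponent for the LTL-to-NBA translation and one complementation per alternation. That gives an automaton whose size is a tower of four exponentials in $|\varphi|$, hence only 3EXPSPACE, i.e.\ a 4EXPTIME time bound. Your ``tower of height 3'' undercounts. Your proposed patch does not fix this. The standard algorithm already builds the NBA for the \emph{negated} quantifier-free body directly, since negation is free at the LTL level. So ``absorbing the innermost $\forall$ into the translation'' is already part of the count, and each of the three alternations above it still forces a complementation.

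\textbf{The missing observation.} Your quantifier ordering is not forced. The conjunct $\psi_{\text{\SR}}[\tau_R,\pi]$ mentions only $\tau_R$ and $\pi'$, not $\tau_{R'}$ or $\pi''$. So $\forall\pi'$ can be pulled out immediately after $\exists\tau_R$ and grouped with $\forall\tau_{R'}$. This gives $\exists\tau_R\,\forall\pi'\,\forall\tau_{R'}\,\exists\pi''$, an $\exists\forall\exists$ prefix with two alternations, which is what the paper's proof relies on. The construction then runs as follows:
\begin{itemize}
\item build the NBA for the body (exponential) and project $\pi''$;
\item complement (double exponential) and project the universal block;
\item complement again (triple exponential), project $\tau_R$, and test emptiness.
\end{itemize}
This yields 3EXPTIME, in fact 2EXPSPACE.

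Equivalently, you can treat $\forall\pi'.\ldots$ as an independent alternation-depth-one constraint on $\tau_R$ (one complementation) and intersect it with the automaton for the other conjunct. That may be what your ``merge'' intends. The justification, though, has to be this independence of the conjunct, not the LTL-to-NBA translation.
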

\begin{proof}
    Consider the following HyperLTL formula $ \psi_{ml}[\pi]  :=   \exists \tau_R. \psi_{\text{\MSR}}[\tau_R, \pi] $. 
    The formula $\psi_{ml}[\pi] $ holds in a transition system $T$ iff there exists $\tau_R$ such that $R$ serves as a \MSR~for $\phiTpi$.
    That is, we have reduced the synthesis problem of \MSRs~to the problem of model-checking $\psi_{ml}[\pi] $ on $T$. Since $\psi_{ml}[\pi] $ %is an $\exists \forall \exists$ formula, it 
    contains two quantifier alternations, its model checking is double-exponential~\cite{finkbeiner2015algorithms}. \qed
\end{proof}

As we demonstrate in \Cref{sec:experiments}, this upper-bound doesn't imply practical hardness, and we manage to find minimal sufficient reasons in a reasonable time.

\section{Temporal Sufficient Reasons}\label{sec:TemporalReasons}

\Cref{ex:infinite} shows that in some cases we cannot find useful \SRs~as set of events on the trace. To overcome this, in the following, we look for temporal properties which will serve as reasons, and show that %in the case of temporal reasons, 
there always exists a unique minimal temporal \SR, which can be represented as an NBA.
As before, we assume the effect is given as an LTL formula $\varphi$.\footnote{All described algorithms, both of \Cref{sec:nontemporal} and of this section, work also if the effect is given directly as an NBA: We anyway translate the LTL formula into an NBA to solve the model-checking problem. If we assume an NBA input, the complexity results of \Cref{sec:nontemporal} are even better, as the PSPACE hardness partially stems from the exponential translation of the LTL formula into an NBA.\label{note:nbaVSltl}}
\begin{definition}[Temporal Sufficient Reasons]\label{def:TSR}
Let $T$ be a system, 
 $\pi \in \TR(T)$, $\varphi$ an LTL property over $\ap$ with $\pi\vDash\varphi$, and $P_R$ a trace property~over~$\ip$. 
%  Let $\pi \in \Sigma^\omega$ be a trace, let $\varphi_{\textit{spec}}$ be a specification, and let $\varphi_{\textit{reas}}$ be an LTL formula.
  \begin{itemize}[nosep]
      \item We say that $P_R$ %over the set of atomic propositions $\ip$ is a 
    is a 
    \emph{temporal sufficient reason (\TSR)} for
      $\phiTpi$ if $\pi $ satisfies $ P_R$ and 
      %$f_{\varphi_{\textit{spec}}}(\pi) = b$ 
      if for every trace $\tau\in \TR(T)$, if $\tau$  satisfies $P_R$ then $\tau \vDash \varphi$. This is exactly the temporal extension of \Cref{def:SR}: Since we assume $\pi \vDash \varphi$, we need to check that all traces that agree with $\pi$ on the satisfaction of $P_R$, would also satisfy $\varphi$. 
      \item  We say that $P_R$ is a \emph{minimal temporal sufficient reason (\MTSR)} if there is no
      %for every 
      \TSR~$P_{R'}$ such that $P_{R'} \supsetneq P_R$.
     % $\varphi_{R'}$ we have that $\varphi_{R'} \rightarrow \varphi_{R}$.
     %$\varphi_{\textit{reas}} \rightarrow \psi$. %\textcolor{red}{is this the correct direction of implication? maybe not as this is the opposite to the size criterion}
  \end{itemize}
\end{definition}

\begin{remark}
In \Cref{def:SR} $R$ is minimal if there is no $R'\subsetneq R$, while in \Cref{def:TSR} $P_R$ is minimal if there is no $P_{R'} \supsetneq P_R$. 
These follow the same reasoning, and are not contradicting:  
     Unlike other approaches for temporal causality, which look for the strongest condition that implies the effect~\cite{DBLP:conf/atva/CoenenFFHMS22,FFMS24}, here we look for the weakest such condition. 
      $P_R = (2^\ip)^\omega$ subsumes all trace properties over $\ip$, 
      %\emph{true} is implied by all other LTL formulas, 
      and will be a \MTSR~only if all traces satisfy the effect $\varphi$,  as in the case of $R = \emptyset$.
        
          %Note that the trace property $P_R$ can, in particular, be the language of some LTL formula, as we show in the following example.  
      
\end{remark}

\begin{example}\label{ex:temporal}
    Consider again the trace  $\pi = \{r_1, g_1\}^\omega$ and the effect $\varphi = \LTLglobally \LTLeventually g_1$  of \Cref{ex:infinite}. Then 
    the following are examples of possible \TSRs~for $\phiTpi$: $P_{R_1} = \mathcal{L}(\LTLglobally r_1)$, and $P_{R_2} = \mathcal{L}(\LTLglobally \LTLeventually r_1)$. In this case, $P_{R_2}$ also serves as a \MTSR. 

    This demonstrates the need in temporal reasons: Recall that due to the infinite chain of decreasing \SRs, we could not find a \MSR~for the effect $\LTLglobally\LTLeventually g_1$ (\Cref{ex:infinite}). The use of symbolic temporal reasons overcomes this difficulty. 
%    ... \textcolor{red}{this looks like we will always get the spec itself, which is very not useful. Then maybe we do want to talk about input-output and the dynamics of the system? in any case this can be a motivation for why introducing the additional level of input-output and systems. ok, it will not always be the spec, as we now show:}

    %\textcolor{red}{if we use the same implication, we get that any specific occurrence is a minimal cause, that is $\LTLnext^i$. but this was actually the case before as well, without introducing temporal causes. So........ do we really need them? or should we rephrase to "the un-usefulness of temporal causes"? }
\end{example}

\subsection{Verification of Temporal Sufficient Reasons} 

As before, we first show how to encode \TSRs~in HyperLTL, and then we use HyperLTL model checking to verify them. 
$P_R$ is a \TSR~for $\phiTpi$ if %$ \forall \pi'\in\TR(T). \pi'_{\downarrow_\ip} \in P_R \rightarrow \pi' \vDash \varphi$. 
\begin{align}\label{eq:buchiP}
    \forall \pi'\in\TR(T). \pi'_{\downarrow_\ip} \in P_R \rightarrow \pi' \vDash \varphi
\end{align}

If $P_R$ is the language of an LTL formula $\varphi_R$, we have that $\varphi_R$ is a \TSR~if:% $\psi_\text{tsr}$ holds, where $\psi_\text{tsr} =   \forall \pi'. {\varphi_R}_{\pi'} \rightarrow \varphi_{\pi'}$. 
\begin{align}\label{eq:TSR}
    \forall \pi'. {\varphi_R}_{\pi'} \rightarrow \varphi_{\pi'}
\end{align}
\Cref{eq:TSR} is equivalent to the LTL formula ${\varphi_R} \rightarrow \varphi$, since LTL implicitly universally quantifies over all system traces. The implication has to be true only for $\TR(T)$, and does not have to hold for all possible traces.

\begin{theorem}\label{thm:pspacetemporal}
    Given a trace property $P_R$ as an NBA 
    or an LTL formula, a system $T$, an ultimately periodic trace $\pi\in\TR(T)$, and an LTL effect $\varphi$, 
    verifying if $P_R$ is a \TSR~for $\phiTpi$ is PSPACE-complete. 
    %This holds also if $P_R$ is the language of an LTL formula. 
%
\end{theorem}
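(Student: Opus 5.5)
The proof splits into a PSPACE upper bound, obtained by a reduction to LTL model checking (or NBA emptiness), and PSPACE-hardness, obtained by the same reduction used for \Cref{thm:SRpspacecomplete}.

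\textbf{Membership.} By \Cref{def:TSR}, $P_R$ is a \TSR~for $\phiTpi$ iff two conditions hold. First, $\pi_{\downarrow_\ip}\in P_R$. Second, \Cref{eq:buchiP} holds, i.e.\ $\TR(T)\cap \mathcal{L}((\aut_{R})_{\uparrow_{\ap}})\cap\mathcal{L}(\aut_{\neg\varphi})=\emptyset$, where $\aut_R$ is the given NBA for $P_R$, or $\aut_{\varphi_R}$ if $P_R$ is given as an LTL formula.

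For the first condition, I would write $\pi=u\cdot v^\omega$ and check it in polynomial time. Build the lasso automaton for $\pi_{\downarrow_\ip}$ of size $|u|+|v|$ and test non-emptiness of its product with $\aut_R$. When $P_R$ is given by an LTL formula, instead evaluate $\varphi_R$ on the lasso by the standard polynomial-time labelling procedure.

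For the second condition, if $P_R$ is an LTL formula $\varphi_R$, then by \Cref{eq:TSR} we only need to model check $T\vDash \varphi_R\rightarrow\varphi$, which is in PSPACE. If $P_R$ is an NBA, I would run the standard on-the-fly emptiness check of the product $T\times(\aut_R)_{\uparrow_\ap}\times\aut_{\neg\varphi}$. Here $T$ and $\aut_R$ are explicit and $\aut_{\neg\varphi}$ is built state-by-state from $\neg\varphi$. A nondeterministic lasso search stores only a polynomial-size product state, so the check is in NPSPACE, which equals PSPACE by Savitch's theorem. The one point to handle carefully is the extension $(\aut_R)_{\uparrow_{\ap}}$ of \Cref{def:extension}: the letters range over $2^\ap$, but in the product the transitions are guided by those of $T$, so no exponential blow-up in the letters arises.

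\textbf{Hardness.} I would reduce from LTL model checking exactly as in the proof of \Cref{thm:SRpspacecomplete}. Build $T'$ from $T$ by adding the fresh input $\fresh$ and state $q_\fresh$, let $\varphi'=\varphi\vee\LTLglobally\fresh$ and $\pi=\pi_\fresh$, and choose $P_R=(2^{\ip\cup\{\fresh\}})^\omega$, given by the LTL formula $\true$ or a one-state NBA. Then $\pi_\fresh\in P_R$ trivially, and $P_R$ is a \TSR~iff every trace of $T'$ satisfies $\varphi'$. As in that proof, this holds iff $T\vDash\varphi$, which is the analogue of \Cref{lemma:SRemptyset} for the trivial property. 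The construction is polynomial.

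I expect the only real subtlety to be the upper bound in the NBA case. There the exponential translation of $\varphi$ must stay implicit, and the extension to $\ap$ must not be built explicitly over $2^\ap$; the on-the-fly product argument above addresses both.
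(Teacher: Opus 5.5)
Your proposal is correct and follows the paper's proof. Membership goes by reduction to LTL model checking of $\varphi_R\rightarrow\varphi$, or to emptiness of the product of $T$, the extended NBA $(\aut_R)_{\uparrow_{\ap}}$ and $\aut_{\neg\varphi}$; hardness goes via the $\fresh$-state reduction from \Cref{thm:SRpspacecomplete} with $P_R=\true$. Your explicit polynomial-time check that $\pi_{\downarrow_\ip}\in P_R$ and your on-the-fly handling of the extension and of $\aut_{\neg\varphi}$ are details the paper's proof leaves implicit (its membership argument omits the first condition of the definition), but they add care rather than change the argument.
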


\begin{proof}[sketch] 
    Membership in PSPACE: 
First, assume that $P_R$ is given as an NBA over $2^\ip$.
    Let $\mathcal{A}_{P_R}$ be the extension of $P_R$ to the full set $2^\ap$ (c.f. \Cref{def:extension}).
    Then,   \Cref{eq:buchiP} reduces to model-checking whether $\mathcal{A}_{P_R} \cap T \vDash\mathcal{A}_\varphi$.

    If $P_R$ is an LTL formula, to verify that $P_R$ is a \TSR~for $\phiTpi$ we need to model-check \Cref{eq:TSR} against $T$, which is an LTL model-checking problem.

    PSPACE hardness: We reduce LTL model-checking to \TSRs~verification, with almost the same proof as for \Cref{thm:SRpspacecomplete}, but now we are given the trace property $P_R = \mathit{true}$ rather then the empty set of events (an automaton that accepts all words has one state, and an LTL formula is of constant size):
       $\mathit{true}$ is a \TSR~for $\phiTpi$ iff all traces of $T$ satisfy $\varphi$. 
       %See the full version of this paper for the full proof. 
       We provide the full proof in Appendix~\ref{app:proofs}.  
       \qed
       %for completeness. 
\end{proof}

\subsubsection*{Verification of Minimal Temporal Sufficient Reasons}
%For minimal sufficient reasons, as we show 
In \Cref{sec:temporalsynthesis} we show that there is a unique \MTSR~$S$ given $\varphi$, $T$ and $\pi$. Therefore, to verify whether a traces property $P_R$
is a \MTSR, 
%, given as a Büchi automaton or as an LTL formula, 
all we need to do is to check equivalence between $P_R$ and the set $S$ given in \Cref{sec:temporalsynthesis}.

\subsection{Synthesis of Minimal Temporal Sufficient Reasons}\label{sec:temporalsynthesis}
In the following, given an LTL effect $\varphi$, we show an automata construction for the \MTSR~for $\phiTpi$. Even when the effect is given as an LTL formula, the \MTSR~might not be expressible in LTL, and require the $\omega$-regular power of NBAs. This was also demonstrated in~\cite{DBLP:conf/atva/CoenenFFHMS22} for Halpern and Pearl's actual causality. We observe here that this is a result of the $\omega$-regular nature of the transition system $T$, and not due to the specific causality definition.

\begin{example}\label{ex:omegareg}
    Let $T$ be the transition system of \Cref{fig:qptl}, and consider the trace $\pi = \set{i_1, o_1}^\omega$, and the effect $\LTLglobally o_1$. 
    Possible \TSRs~for $\LTLglobally o_1$ in $T$ are the LTL formulas $\LTLglobally i_1$ and $i_1 \wedge \LTLnext\LTLnext \LTLglobally i_1 $. However, since the input at every odd position does not influence the output, a \MTSR~only needs to constraint the inputs at even timepoints. In fact, the \MTSR~here for $\LTLglobally o_1$ is $P_R:=$\emph{\quotes{$i_1$ holds at every even timestep}}. First, $\pi$ satisfies $P_R$. Second, every trace that satisfies $P_R$ also satisfies the effect. Last, $P_R$ is minimal since every trace on which at some even position $i_1$ does not hold, will loop on $q_0$ and will output $\{\}$ at that point. 
    The property $P_R$ cannot be expressed in LTL, but there is a simple NBA for it. 
\end{example}

We now prove an automata characterization and construction of the \MTSR. For the proof, we use the following observation:

\begin{observation}
    \label{lemma:tracepropertyres}
    Let $S$ be a trace property over $\ip$. For $\pi_1,\pi_2\in (2^\ap)^\omega $, if $\pi_1 \equiv_\ip \pi_2$ then $\pi_1 $ satisfies $S$ iff $\pi_2$ satisfies $S$.
\end{observation}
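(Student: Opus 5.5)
The observation follows directly from the definition of satisfaction for trace properties over a subset of propositions given in the preliminaries. There, a word $\alpha \in (2^\ap)^\omega$ satisfies a property $S \subseteq (2^{\ip})^\omega$ iff its projection $\alpha_{\downarrow_{\ip}}$ belongs to $S$. The plan is to show that the hypothesis $\pi_1 \equiv_\ip \pi_2$ forces the two projections to be the very same infinite word. The claim then reduces to the trivial fact that membership of a single word in $S$ does not depend on which name we use for it.

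The first step is to unfold $\pi_1 \equiv_\ip \pi_2$. It abbreviates $\LTLglobally(\pi_1 =_\ip \pi_2)$, which means that for every $t \in \nat$ and every $a \in \ip$, $a \in \pi_1[t]$ iff $a \in \pi_2[t]$. Hence $\pi_1[t] \cap \ip = \pi_2[t] \cap \ip$ for every $t$.

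The second step uses the definition of projection, $\pi_{\downarrow_{\ip}} = (\pi[0]\cap\ip)(\pi[1]\cap\ip)\cdots$. Comparing the two projections letter by letter, the first step gives $(\pi_1)_{\downarrow_{\ip}} = (\pi_2)_{\downarrow_{\ip}}$ as elements of $(2^{\ip})^\omega$.

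The third step concludes. We have $\pi_1$ satisfies $S$ iff $(\pi_1)_{\downarrow_{\ip}} \in S$, iff $(\pi_2)_{\downarrow_{\ip}} \in S$ by the second step, iff $\pi_2$ satisfies $S$.

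There is no real obstacle. The only point requiring care is reading $\equiv_\ip$ as a pointwise statement over all timesteps, rather than the single-timestep relation $=_\ip$, so that the equality holds at every position of the projection.
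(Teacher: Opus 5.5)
Your proof is correct and takes the same route as the paper's one-line justification: $\pi_1 \equiv_\ip \pi_2$ gives $(\pi_1)_{\downarrow_{\ip}} = (\pi_2)_{\downarrow_{\ip}}$, so membership in $S$ coincides. You simply unfold $\equiv_\ip$ and the projection in more detail.
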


\begin{theorem}\label{thm:minimaltraceproperty}
    Given a system $T$, $\pi\in\TR(T)$ and an LTL effect $\varphi$, 
    The \MTSR~for $\phiTpi$ is $\mathcal{L}((T \cap \mathcal{A}_\varphi)_{\downarrow_\ip})$, that is, the projection of $(T \cap \mathcal{A}_\varphi)$ over the set $\ip$.
\end{theorem}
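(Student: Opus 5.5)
The plan is to show that the set $S := \mathcal{L}((T \cap \mathcal{A}_\varphi)_{\downarrow_\ip})$ is the \emph{greatest} \TSR~with respect to containment. Uniqueness and minimality in the sense of \Cref{def:TSR} then follow immediately. First I would unfold the automaton construction semantically. By \Cref{def:projection}, a word $w \in (2^\ip)^\omega$ lies in $S$ iff there is some $\tau \in \TR(T) \cap \mathcal{L}(\varphi)$ with $\tau_{\downarrow_\ip} = w$. This is the standard fact that the language of a projected NBA is the projection of its language: an accepting run on $\tau$ is an accepting run on $\tau_{\downarrow_\ip}$, and conversely the hidden letters $B$ can be chosen transition by transition. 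I would also record the key structural fact. Since $T$ is input deterministic and $\delta$ is total, every $w \in (2^\ip)^\omega$ induces exactly one trace $\tau_w \in \TR(T)$ with ${\tau_w}_{\downarrow_\ip} = w$. Hence $S = \{ w ~|~ \tau_w \vDash \varphi\}$.

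Next I would verify that $S$ is a \TSR. We have $\pi \in \TR(T)$ and $\pi \vDash \varphi$, so $\pi_{\downarrow_\ip} \in S$, i.e., $\pi$ satisfies $S$. Now let $\tau \in \TR(T)$ with $\tau_{\downarrow_\ip} \in S$. By the characterization above there is a $\tau' \in \TR(T)$ with $\tau' \vDash \varphi$ and $\tau' \equiv_\ip \tau$. By input determinism $\tau' = \tau$, so $\tau \vDash \varphi$.

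Then I would show that every \TSR~$P$ satisfies $P \subseteq S$. Take $w \in P$ and consider its unique system trace $\tau_w$. Then ${\tau_w}_{\downarrow_\ip} = w \in P$, so $\tau_w$ satisfies $P$; this is consistent with \Cref{lemma:tracepropertyres}. Since $P$ is a \TSR, $\tau_w \vDash \varphi$, and therefore $w \in S$. From this, $S$ is an \MTSR: any \TSR~$P' \supsetneq S$ would contradict $P' \subseteq S$. $S$ is also the only one. If $P$ is an \MTSR~with $P \neq S$, then $P \subsetneq S$ while $S$ is a \TSR, which contradicts the minimality of $P$.

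The main obstacle I expect is the step that uses input determinism and totality of $\delta$, which has to be stated carefully. It guarantees that every input word is realized by some system trace, which is needed for $P \subseteq S$. It also guarantees that this trace is unique, which is needed for soundness of $S$. Without it, the projection could admit input words whose other system traces violate $\varphi$. A secondary subtlety is the interplay between trace properties over $\ip$ and traces over $\ap$: $P$ may contain input words not realized by $\pi$-like traces, and the semantic notion of ``satisfies'' is defined via projection. Both are handled by the observation that in a Mealy machine $\tau \mapsto \tau_{\downarrow_\ip}$ is a bijection between $\TR(T)$ and $(2^\ip)^\omega$.
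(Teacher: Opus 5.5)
Your proof is correct and follows the paper's route. Like the paper, you show that $S$ is a \TSR{} using input determinism, then show that every \TSR{} is contained in $S$. The differences are presentational: working directly with input words via the bijection $\tau \mapsto \tau_{\downarrow_{\ip}}$ between $\TR(T)$ and $(2^{\ip})^\omega$ removes the paper's case split ($\pi' \in T$ versus $\pi' \notin T$), and you derive explicitly the uniqueness that the paper states as a corollary.
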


%\begin{proof}
 %   We have ${\pi_1}_{\downarrow_{\ip}} = {\pi_2}_{\downarrow_{\ip}} $, thus    ${\pi_1}_{\downarrow_{\ip}} \in S \leftrightarrow {\pi_2}_{\downarrow_{\ip}} \in S$   and the lemma follows directly. 
%\end{proof}

\begin{proof}[of \Cref{thm:minimaltraceproperty}]
    We denote $S = \mathcal{L}((T \cap \mathcal{A}_\varphi)_{\downarrow_\ip})$.
    %\begin{itemize}[nosep]
         We first show that
     $S$
    is a \TSR: $\pi \in T \cap \aut_\varphi$, therefore $\pi_{\downarrow_\ip} \in S$, thus $\pi$ satisfies $S$.   

    Let $\pi' \in T $. We show that if $\pi'$ satisfies $S$ then $\pi'\vDash \varphi$. Denote $\pi' = \pi'_{\downarrow_\ip} \cup \pi'_{\downarrow_\op}$. Since $\pi'$ 
    satisfies $S$ we have $\pi'_{\downarrow_\ip} \in S$, and, by the definition of the projection, it means that there is $\pi'_o$ such that $\pi'_{\downarrow_\ip} \cup \pi'_o \in (T \cap \aut_\varphi)$. In particular, we have $\pi'_{\downarrow_\ip} \cup \pi'_o \in T$, and, by the assumption, we also have $\pi' = \pi'_{\downarrow_\ip} \cup \pi'_{\downarrow_\op} \in T$. Since $T$ is input-deterministic, $\pi' = \pi'_{\downarrow_\ip} \cup \pi'_o $, therefore $\pi' \in \aut_\varphi$ as needed.

Minimality: Let $R$ be a \TSR, we show that $R \subseteq S$. Let 
    $\pi'$ satisfy $R$, that is 
    $\pi'_{\downarrow_\ip}\in R$. 
    We need to show that    $\pi'_{\downarrow_\ip}\in S$. 
    %such that $\pi' = \pi'_{\downarrow_\ip} \cup \pi'_{\downarrow_\op}$. 
    We distinguish between two cases: 
    \begin{itemize}[nosep]
        \item $\pi' \in T$: Since $R$ is a \TSR~for $\varphi$, we have that $\pi'\vDash \varphi$, therefore $\pi' \in T\cap \aut_\varphi$, which means that    $\pi'_{\downarrow_\ip}\in S$ as needed. 
        \item $\pi' \notin T$: Denote    $\pi' = \pi'_{\downarrow_\ip}\cup \pi'_{\downarrow_\op}$. Since $T$ is input deterministic, there is $\pi'_{o}$ such that $\pi'' = \pi'_{\downarrow_\ip} \cup \pi'_{o}$ and $\pi'' \in T$. Due to \Cref{lemma:tracepropertyres}, since $\pi'\equiv_\ip \pi''$, and since $\pi'$ satisfies $R$, we have that $\pi''$ satisfies $R$. Therefore, since $R$ is a \TSR, $\pi'' \in \aut_\varphi$, and thus $\pi'' \in T\cap \aut_\varphi$, which means that $\pi''$ satisfies $S$. From \Cref{lemma:tracepropertyres}, and since $\pi'\equiv_\ip \pi''$, we conclude that $\pi'$ satisfies $S$. \qed
    \end{itemize}
\end{proof}

\begin{corollary} Given a transition system $T$, ultimately periodic $\pi$, and an effect~$\varphi$,
    the \MTSR~for
    $\phiTpi$ is $\omega$-regular and {unique}.\footnote{
    Here, $\varphi$ can be an LTL formula or some $\omega$-regular trace property, see \Cref{note:nbaVSltl}.}
\end{corollary}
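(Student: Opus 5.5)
The corollary follows from \Cref{thm:minimaltraceproperty} by three observations: the characterizing set is $\omega$-regular, the characterization forces uniqueness, and the ultimate periodicity of $\pi$ is not needed beyond guaranteeing $\pi\in S$.

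\emph{$\omega$-regularity.} Let $S = \mathcal{L}((T \cap \mathcal{A}_\varphi)_{\downarrow_\ip})$. We view $T$ as an NBA with all states accepting. We translate $\varphi$ into $\mathcal{A}_\varphi$~\cite{DBLP:conf/lics/VardiW86}, or take it directly if the effect is already given as an NBA (\Cref{note:nbaVSltl}). NBAs are closed under intersection via the standard product construction; since every state of $T$ is accepting, no Büchi bookkeeping is needed. Projection as in \Cref{def:projection} keeps the states and only relabels transitions by existentially dropping the $\op$-part. It therefore yields an NBA over $2^\ip$, and its language is exactly the set of $\ip$-projections of accepted words. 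Hence $S$ is recognized by an NBA and is $\omega$-regular.

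\emph{Uniqueness.} The proof of \Cref{thm:minimaltraceproperty} shows two things: $S$ is a \TSR, and every \TSR~$R$ satisfies $R\subseteq S$. Here we read trace properties over $\ip$ as sets in $(2^\ip)^\omega$, which is justified by \Cref{lemma:tracepropertyres}. Now let $P$ be any \MTSR. Then $P\subseteq S$ and $S$ is a \TSR, so $P\subsetneq S$ would contradict the minimality of $P$ in the sense of \Cref{def:TSR}. Hence $P=S$. Conversely, $S$ is itself minimal, since no \TSR~strictly contains it. So $S$ is the unique \MTSR.

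\emph{Main obstacle.} The only delicate point is making the comparison of trace properties precise. A property over $\ip$ must be identified with its set of $\ip$-words, rather than with arbitrary supersets containing words outside the projection of $\TR(T)$. I would handle this by checking that the inclusion $R\subseteq S$ established in the theorem's minimality argument is indeed an inclusion of subsets of $(2^\ip)^\omega$. That argument covers both cases, $\pi'\in T$ and $\pi'\notin T$, so it holds for all words in $R$ and not just for the projections of system traces. With that inclusion in place, the antisymmetry argument above goes through directly.
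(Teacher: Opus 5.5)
Your proposal is correct and follows the paper's route. The paper reads the corollary off from \Cref{thm:minimaltraceproperty}, whose proof shows that $S=\mathcal{L}((T\cap\mathcal{A}_\varphi)_{\downarrow_\ip})$ is a \TSR{} containing every \TSR{}, hence the unique maximal one, and $\omega$-regularity comes from closure of NBAs under product and projection; your check that $R\subseteq S$ holds as an inclusion in $(2^\ip)^\omega$ is sound. One small precision: $\pi\in S$ follows from $\pi\in\TR(T)$ and $\pi\vDash\varphi$, so the ultimate periodicity of $\pi$ is not used at all.
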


\begin{remark}
    The \MTSR~is independent of $\pi$: The only constraint is that $\pi \vDash \varphi$. Our framework allows one to choose the non-temporal and more concrete reasons, or the temporal more general ones.  
\end{remark}
%
%To summarize, we have the following:
\begin{theorem}\label{thm:temporalReasonscomplexities}
Given $T, \pi$ and an LTL effect $\varphi$,
     finding the \MTSR~for $\phiTpi$ is exponential in $\varphi$. 
\end{theorem}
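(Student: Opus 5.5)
We derive \Cref{thm:temporalReasonscomplexities} directly from the characterization in \Cref{thm:minimaltraceproperty}: the unique \MTSR~for $\phiTpi$ is $\mathcal{L}((T \cap \mathcal{A}_\varphi)_{\downarrow_\ip})$. Finding it therefore amounts to constructing an NBA for this language, and we bound the cost of each step of that construction. \Cref{thm:minimaltraceproperty} is independent of $\pi$, so $\pi$ plays no role beyond the given assumption $\pi \vDash \varphi$. If desired, we can sanity-check this assumption by running the ultimately periodic lasso of $\pi$ through $\mathcal{A}_\varphi$, which costs no more than building $\mathcal{A}_\varphi$.

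The construction has three steps. First, we translate $\varphi$ into an NBA $\mathcal{A}_\varphi$ using the standard translation~\cite{DBLP:conf/lics/VardiW86}. This automaton has $2^{O(|\varphi|)}$ states and can be built in time exponential in $|\varphi|$. Second, we view $T$ as an NBA with $F = Q$, as described in \Cref{sec: prelim}, and form the product $T \cap \mathcal{A}_\varphi$. Since all states of $T$ are accepting, the product's acceptance condition is inherited from $\mathcal{A}_\varphi$ alone, so no generalized-Büchi bookkeeping is needed. The product has $|Q| \cdot 2^{O(|\varphi|)}$ states, and its transitions range over the alphabet $2^\ap$, which is already part of the description of $T$. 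Third, we apply the projection of \Cref{def:projection} onto $\ip$. Projection keeps the state set and the accepting set unchanged; it only relabels each transition $(q, A \cup B, q')$ as $(q, A, q')$. It is therefore computable in time linear in the size of the product automaton.

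Combining the three steps, we obtain an NBA for the \MTSR~whose size is polynomial in $|T|$ and exponential in $|\varphi|$, computed in time of the same order. The one subtle point is that projection is cheap only because the result is allowed to remain \emph{nondeterministic}: we never complement or determinize. Should a deterministic or LTL representation be required, the bound would no longer hold; \Cref{ex:omegareg} shows that an LTL representation need not exist at all. For the same reason, the exponential in $|\varphi|$ comes solely from the LTL-to-NBA translation. If the effect is given directly as an NBA (cf.\ \Cref{note:nbaVSltl}), the whole construction is polynomial.
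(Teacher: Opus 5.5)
Your proposal is correct and follows the paper's proof. The paper argues that the \MTSR~is the projection $\mathcal{L}((T \cap \mathcal{A}_\varphi)_{\downarrow_\ip})$ from \Cref{thm:minimaltraceproperty}, and that $\mathcal{A}_\varphi$ is exponential in $\varphi$; you spell out the polynomial product and linear projection costs that the paper leaves implicit.
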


\begin{proof}
    The proof follows directly from the construction of the \MTSR~in \Cref{thm:minimaltraceproperty}, and the fact that $\mathcal{A}_\varphi$ is exponential in $\varphi$ \cite{DBLP:conf/lics/VardiW86}. \qed
\end{proof}

\begin{figure}[t]
    \centering
   % \hspace{0.1cm}
    %
    \begin{minipage}[t]{0.37\textwidth}
        \centering
        \begin{tikzpicture}[scale=0.13]
            \tikzstyle{every node}+=[inner sep=0pt]
            \draw [black] (27.4,-23.4) circle (3);
            \draw (27.4,-23.4) node {$q_0$};
            \draw [black] (43.9,-23.4) circle (3);
            \draw (43.9,-23.4) node {$q_1$};
            \draw [black] (18.4,-23.4) -- (24.4,-23.4);
            \fill [black] (24.4,-23.4) -- (23.6,-22.9) -- (23.6,-23.9);
            \draw [black] (29.843,-21.672) arc (118.2625:61.7375:12.263);
            \fill [black] (41.46,-21.67) -- (40.99,-20.85) -- (40.52,-21.73);
            \draw (35.65,-19.71) node [above] {$\{i_1\}|\mbox{ }\{o_1\}$};
            \draw [black] (41.757,-25.485) arc (-54.05228:-125.94772:10.403);
            \fill [black] (29.54,-25.48) -- (29.9,-26.36) -- (30.48,-25.55);
            \draw (35.65,-27.97) node [below] {$\{\},\{i_1\}|\{o_1\}$};
            \draw [black] (25.266,-21.308) arc (253.29005:-34.70995:2.25);
            \draw (23.77,-16.41) node [above] {$\{\}|\{\}$};
            \fill [black] (27.76,-20.43) -- (28.47,-19.81) -- (27.51,-19.52);
        \end{tikzpicture}
        \caption{A system over $\ip = \{i_1\}, \op = \{o_1\}$ for \Cref{ex:omegareg}.}
        \label{fig:qptl}
    \end{minipage}
    \hfill
    \begin{minipage}[t]{0.59\textwidth}
         \includegraphics[width=\linewidth]{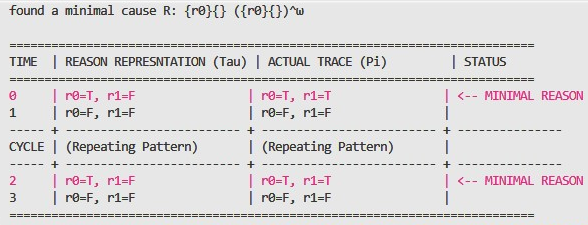}
    
    % The caption will be confined to the 0.5\textwidth width
    \caption{\textbf{\MSR~synthesis.} Above CYCLE is the prefix of the trace, while below CYCLE is the $\omega$-repeating suffix.}
    \label{fig:msr_output}
    \end{minipage}
\end{figure}

\section{Contrastive Explanations}
\label{sec:ContrastiveExplanations}
Contrastive explanations are system traces that, in contrast to $\pi$, do not satisfy the effect. We will usually be interested in \emph{minimal} contrastive explanations, i.e., the ones that are the closest to the trace $\pi$, to understand what are the minimal changes needed to be done in order to avoid the effect. %In addition, while the definition of~\cite{DBLP:journals/pacmmod/BarceloKRSV25}, as given in \Cref{sec: prelim} only req 
To reason about such minimal explanations formally, we first define a distance between traces. 
%
%\paragraph{Distance between traces}
We define a distance measure $\Delta:\alltraces \times \alltraces \rightarrow 2^{\mathbb{N}\times \ip}$ to be the set of all inputs and timesteps on which the two traces disagree:
         \begin{align*}
             \Delta(\pi_1, \pi_2) = \{ (n, i) ~|~
         n\in\nat, i\in\ip,
         i\in \pi_1[n] \setminus \pi_2[n] \vee  i\in \pi_2[n] \setminus \pi_1[n]\}
         \end{align*}
\begin{definition}[Contrastive Explanations]
    Let $T$ be a transition system, let
 $\pi,\cont \in \TR(T)$, and let $\varphi$ be an LTL formula such that $\pi\vDash\varphi$.
    \begin{itemize}[nosep]
        \item We say that $\cont$ is a \emph{contrastive explanation (\CE)} for 
        $\phiTpi$ 
        %$\pi \vDash \varphi$ 
        if $\cont \nvDash \varphi$. 
        \item 
        We say that $\cont$ is a \emph{minimal contrastive explanation (\MCE)} for $\phiTpi$ if $\cont \nvDash \varphi$, and, for every trace $\tau'\neq \cont\in \TR(T)$ such that $\tau'\nvDash \varphi$ we have $\Delta(\pi, \tau') \not\subseteq \Delta(\pi, \cont)  $. %\footnote{This is the common approach when reasoning about temporal causality and counterfactuals, see e.g. \cite{FFMS24}.} 
        That is, there is no \CE~that is closer to $\pi$ than $\cont$. 
    \end{itemize}
\end{definition}
\begin{remark}\label{remark:ces as set if events}
    We note that \CEs~are traces, and thus are not of the same type as \SRs, which are sets of event. This choice follows the definition of \cite{DBLP:journals/pacmmod/BarceloKRSV25} that defines a \CE~as another input vector, and \SRs~as a subset of components of the input vector. However, one can easily convert a \CE~$\cont$ to be in the form of sets of events, by taking $\Delta(\pi,\cont)$.
\end{remark}

\subsection{Verification and Synthesis of 
\CEs~using HyperLTL Encoding}\label{sec:CE via hyperltl}

As we did for \SRs, we first show how to encode \CEs~in HyperLTL, and then~use this encoding for verification and synthesis of \CEs. Since finding some trace~that does not satisfy the effect is simply a model checking problem, we focus on~\MCEs.

%\paragraph{Expressing contrastive explanations in HyperLTL}
The trace $\cont$ is a \MCE~for $\phiTpi$ if any trace $\pi'$ that agrees with $\pi$ whenever $\cont$ agrees with $\pi$, that is, $\Delta(\pi, \pi') \subseteq \Delta(\pi,\cont)$, satisfies $\varphi$. 
We formalize this using the formula
$\psi_{\text{\CE}} [\cont]$:
\begin{align*} 
    \psi_{\text{\CE}} [\cont]:= \neg \varphi_{\cont} \wedge
    %\neg \varphi_{\cont} \wedge
   % \label{eq:count1}
     \forall\pi'. (\pi' \not\equiv \cont) \rightarrow
    \Big(\LTLglobally  \bigwedge_{p\in \ip} \big((p_{\cont} \leftrightarrow p_\pi) \rightarrow  (p_{\pi'} \leftrightarrow p_\pi)\big) \rightarrow \varphi_{\pi'} \Big) %\label{eq:count2}
\end{align*}
%\nadav{if I understood 12 correctly what if we take $\pi' =\cont $}
%Counterfactuals were studied in previous works~\cite{julian...}, as a mechanism to express other notions of explainability. Here, we are interested in generating the counterfactuals themselves, and in the complexity of verifying or finding counterfactual explanations, which was not studied before. 

\subsubsection{Verification of Minimal Contrastive Explanations}

\begin{theorem}\label{thm:cont}Given $T$, $\pi$, $\varphi$, and an ultimately periodic trace $\cont$, 
    the problem of verifying if $\cont$ is a \MCE~for
     $\phiTpi$
is PSPACE-complete. 
\end{theorem}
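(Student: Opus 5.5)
We prove membership and hardness separately, following the structure of the proof of \Cref{thm:SRpspacecomplete}.

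\emph{Membership in PSPACE.} The plan is to use $\psi_{\text{\CE}}[\cont]$ and observe that, since both $\pi$ and $\cont$ are fixed ultimately periodic traces, the only quantified variable is the universally quantified $\pi'$. Hence the check is an LTL model-checking problem over $T$, exactly as in the membership part of \Cref{thm:SRpspacecomplete}. Concretely, we proceed in three steps.
\begin{itemize}[nosep]
\item We check $\cont\in\TR(T)$ and $\cont\nvDash\varphi$ directly. Both $\pi$ and $\cont$ are lassos, so this is a polynomial-space check (simulate $T$ and $\aut_\varphi$ on the lasso, or evaluate $\varphi$ on the lasso in polynomial time).
\item We encode the lassos $\pi$ and $\cont$ as a polynomial-size deterministic component. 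This can be done either by composing a cycle automaton reading fresh copies of the propositions with $T$, or by unfolding the positions of the lassos into the formula via a counter over the prefix and period. In either encoding, the conditions $p_\cont\leftrightarrow p_\pi$ become known constants at each lasso position.
\item The premise $\pi'\not\equiv\cont$ becomes $\LTLeventually(\pi'\neq\cont)$ relative to that component. Thus the universal part becomes model checking of an LTL formula of size polynomial in $|\varphi|+|\pi|+|\cont|$ against the product of $T$ with the lasso component, which lies in PSPACE.
\end{itemize}
One must argue that $\psi_{\text{\CE}}$ indeed captures minimality, i.e.\ that $\Delta(\pi,\pi')\subseteq\Delta(\pi,\cont)$ is exactly the $\LTLglobally$-conjunction. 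This is immediate from the definition of $\Delta$.

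\emph{PSPACE-hardness.} We reduce from LTL model checking, reusing the fresh-state gadget. Given $T$ and $\varphi$, build $T'$ by adding $q_\fresh$ with input $\fresh$, so that $\TR(T')=\TR(T)\cup\{\pi_\fresh\}$. We need a reference trace $\pi$ satisfying the effect and a candidate $\cont$ violating it whose minimality is equivalent to $T\vDash\varphi$. The choices are as follows.
\begin{itemize}[nosep]
\item Take $\cont=\pi_\fresh=\{\fresh\}^\omega$ and the effect $\varphi'=\varphi\wedge\neg\LTLglobally\fresh$, so that $\cont\nvDash\varphi'$.
\item For $\pi$ we need some trace of $T'$ satisfying $\varphi'$. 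To avoid depending on $T$, add a second fresh branch $q_\mathtt{g}$ with fresh input $\mathtt{g}$, giving $\pi_\mathtt{g}=\{\mathtt{g}\}^\omega$, and let $\varphi'=(\varphi\vee\LTLglobally\mathtt{g})\wedge\neg\LTLglobally\fresh$.
\end{itemize}
Then every original trace $\tau$ of $T$ has $\fresh,\mathtt{g}$ false, so $\Delta(\pi_\mathtt{g},\tau)\subseteq\mathbb{N}\times(\ip\cup\{\mathtt{g}\})$. This set may not be contained in $\Delta(\pi_\mathtt{g},\pi_\fresh)=\mathbb{N}\times\{\fresh,\mathtt{g}\}$. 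To force containment, the plan is to restrict original inputs: prefix $T$'s traces so that original inputs are encoded by outputs only, or take $\ip$ of $T$ empty after moving inputs to nondeterminism-free outputs. Since $T$ is input-deterministic, we can instead make the inputs of $T$ into a single fresh branching input set of the new instance and include them in the reference trace $\pi$ choice appropriately.

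\emph{Main obstacle.} The delicate step is this containment bookkeeping: arranging that every violating trace of $T$ is strictly closer to $\pi$ than $\cont$, so that $\cont$ is minimal iff no trace of $T$ violates $\varphi$. I would try first to make $\pi$ and $\cont$ differ on \emph{all} inputs at all times, e.g.\ $\pi=(\ip\cup\{\mathtt{g}\})^\omega$-style and $\cont$ the complement, so that $\Delta(\pi,\cont)=\mathbb{N}\times\ip'$ contains every distance. Minimality then degenerates to ``no other violating trace exists'', which holds iff $T\vDash\varphi$.
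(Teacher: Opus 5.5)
Your membership argument is the paper's: check $\cont\nvDash\varphi$ on the lasso, hard-code $\pi$ and $\cont$, and model-check the universally quantified remainder of $\psi_{\text{\CE}}[\cont]$ as an LTL formula over $T$. Your extra check $\cont\in\TR(T)$ is harmless.

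For hardness, the paper uses exactly your first attempt. It adds two fresh sink branches and sets $\pi=\{\fresh_1\}^\omega$, $\cont=\{\fresh_2\}^\omega$ and $\varphi'=(\varphi\wedge\LTLeventually\neg\fresh_2)\vee\LTLglobally\fresh_1$. It then claims that the antecedent $\LTLglobally\bigwedge_{p}\big((p_{\cont}\leftrightarrow p_\pi)\rightarrow(p_{\pi'}\leftrightarrow p_\pi)\big)$ is always true. Your containment objection shows that this claim is false. $\pi$ and $\cont$ agree (both false) on every original input, so $\Delta(\pi,\cont)=\mathbb{N}\times\{\fresh_1,\fresh_2\}$. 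A trace $\tau\in\TR(T)$ then has $\Delta(\pi,\tau)\subseteq\Delta(\pi,\cont)$ only if all its inputs are empty. So the paper's reduction only tests the single trace of $T$ on the all-empty input sequence. For example, for the arbiter of \Cref{fig:motivation} and $\varphi=\LTLglobally\neg g_1$, it declares $\cont$ an \MCE~although $T\nvDash\varphi$. That construction therefore does not establish hardness.

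Your final repair is the correct one, so commit to it and discard the intermediate options. In particular, making $\ip$ empty would leave an input-deterministic $T$ with a single trace, and model checking that is easy. Concretely, use the same partial-transition gadgets as the paper:
\begin{itemize}[nosep]
\item Let $\ip'=\ip\cup\{\fresh,\mathtt{g}\}$.
\item Add a sink entered from $q_0$ and looping on the single input $\ip\cup\{\mathtt{g}\}$ with empty output, and another sink entered and looping on $\{\fresh\}$.
\item Take $\pi=(\ip\cup\{\mathtt{g}\})^\omega$, $\cont=\{\fresh\}^\omega$ and $\varphi'=(\varphi\vee\LTLglobally\mathtt{g})\wedge\neg\LTLglobally\fresh$.
\end{itemize}
Then $\pi\vDash\varphi'$ and $\cont\nvDash\varphi'$, and every trace of $T$ satisfies $\varphi'$ iff it satisfies $\varphi$. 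Since $\pi$ and $\cont$ disagree on every input at every step, $\Delta(\pi,\cont)=\mathbb{N}\times\ip'$ contains every possible distance. Hence $\cont$ is an \MCE~iff no trace other than $\cont$ violates $\varphi'$, i.e.\ iff $T\vDash\varphi$. This gives a valid polynomial reduction, which the paper's construction does not.
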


\begin{proof}
    {Membership in PSPACE}: 
    We can verify in polynomial time that $\cont \nvDash \varphi$ \cite{DBLP:conf/concur/MarkeyS03}. 
If   $\cont \nvDash \varphi$   we can explicitly encode the traces $\cont$ and $\pi$, 
    and then model-check $T$ against the rest of the formula
    $\psi_{\text{\CE}} [\cont]$.
    %specified in Equations~(\ref{eq:count1})-(\ref{eq:count2}).
    We note that if $\cont$ is fixed, then %the formula in  Equations~(\ref{eq:count1})-(\ref{eq:count2})
    $\psi_{\text{\CE}} [\cont]$
    is in fact an LTL formula, as LTL implicitly has a universal quantifier over all traces (here, the traces $\pi'$).
    %as we can verify it by model-checking the above formula on $T$. 

    PSPACE completeness: We reduce from LTL model-checking. Let $T$ be a transition system. 
    We construct a system $T'$, traces $\pi$ and $\cont$, and an LTL formula $\varphi'$ in polynomial time, such that $T\vDash \varphi$ iff $\cont$  is a \MCE~for $\tuple{\varphi',T', \pi}$: 
    \begin{itemize}[nosep]
        \item 
        Similar to the proof of \Cref{thm:SRpspacecomplete}, we construct $T'$ from $T$ by adding two fresh states $q_{\fresh_1},q_{\fresh_2} \notin Q$, and the transitions 
        $\delta(q_0, \{ \fresh_i\})=(q_{\fresh_i},\emptyset)$,
        $\delta(q_{\fresh_i}, \{ \fresh_i\}) =  (q_{\fresh_i}, \emptyset)$ for $i=1,2$ and fresh input propositions $\fresh_1, \fresh_2 \notin \ap$, so we set $\ip' = \ip \cup \{\fresh_1, \fresh_2 \}$. 
        We have that $\TR(T') = \TR(T) \cup \{ \{\fresh_1\}^\omega, \{\fresh_2\}^\omega \}$.
        \item We set $\varphi' = (\varphi \wedge \LTLeventually \neg \fresh_2) \vee \LTLglobally \fresh_1 $, $\cont = \{ \fresh_2\}^\omega$ and  $\pi =  \{ \fresh_1\}^\omega$.
    \end{itemize}
    
    Then, we have: 
    $\pi \vDash \varphi'$ and $\cont \nvDash \varphi'$; 
      for every $\pi' \in \TR(T)$, $\pi'\vDash \varphi'$ iff $\pi'\vDash \varphi$; 
       and $\cont$ is always different than $\pi$, so the left side of the implication is always evaluated to \emph{true}, and $\cont$ is a \MCE~iff $\varphi$ holds on all system traces. \qed
\end{proof}

\subsubsection{Synthesis of Minimal Contrastive Explanations}\label{sec:coountsynt}
To find a \MCE, we existentially quantify over $\psi_{\text{\CE}} [\cont]$, and
 model-check~$T$ against the formula $\exists \cont. ~\psi_{\text{\CE}} [\cont]$. 
 A witness to this formula is a \MCE.
% If there is a \MCE, it will be a witness to this formula.
%
%
%
Note that \CEs~are over the whole set $\ap$, but we can easily take the projection of such a witness over $\ip$ to get a \CE~over inputs only. 
The complexity is inherited from corresponding HyperLTL model-checking instance, and is at most double-exponential in~$\varphi$.

\subsection{Temporal Contrastive Explanations}
Last, we shortly discuss a symbolic representation of \CEs. 

\begin{definition}[Temporal Contrastive Explanations]\label{def:temporalcont}
Let $T$ be a transition system, let
 $\pi \in \TR(T)$, and let $\varphi$ be an LTL formula such that $\pi\vDash\varphi$.
 \begin{itemize}[nosep]
     \item A \emph{temporal contrastive explanation (\TCE)} for $\phiTpi$ is a set $S \subseteq (2^\ip)^\omega$ such that $S\subseteq (T \cap \mathcal{A}_{\neg \varphi}){\downarrow_\ip}$. 
     \item The \emph{minimal temporal contrastive explanation (\MTCE)} is 
   $S\subseteq (T \cap \mathcal{A}_{\neg \varphi}){\downarrow_\ip}$, such that
   for all traces $\tau \in \TR(T)$, we have that $\tau$ is a \MCE~iff $\tau_{\downarrow_\ip} \in S$. That is, $S$ is the set of all \MCEs, projected over $\ip$.
   %all traces in $S$ are minimal contrastive explanations. 
 \end{itemize}
\end{definition}

\subsubsection{Verification of Temporal Contrastive Explanations}
Given an $\omega$-regular set $S$ as an NBA over $\ip$, we can verify if $S$ is a \MTCE~by model-checking a slight modification of the formula $\psi_{\text{\CE}}$ above, as follows:
\begin{align*} %\label{eq:countver1}
    \forall {\cont} \in S. \forall \pi' \in T.~
\neg \varphi_{\cont} \wedge \pi' \not\equiv \cont \rightarrow
    \Big(\LTLglobally  \bigwedge_{p\in \ip} \big((p_{\cont} \leftrightarrow p_\pi) \rightarrow  (p_\pi' \leftrightarrow p_\pi)\big) \rightarrow \varphi_{\pi'} \Big) %\label{eq:countver2}
\end{align*}

%This requires model-checking over the two automata, $T$ and $S$, but this does not change the complexity of the problem\hadar{cite}. 

\begin{theorem}
    Given a set $S$, system $T$, trace $\pi$ and effect $\varphi$, verifying whether  $S$ is a \MTCE~for $\phiTpi$ is PSPACE-complete. 
\end{theorem}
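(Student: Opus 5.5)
We prove the two directions separately, following the pattern of \Cref{thm:cont} and \Cref{thm:pspacetemporal}.

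\textbf{Membership in PSPACE.} Since $S$ is an NBA over $\ip$, we take its extension $\aut_S$ to $\ap$ (\Cref{def:extension}). The quantifier $\forall \cont \in S$ then ranges over traces of $T \cap \aut_S$, and $\forall \pi'$ ranges over traces of $T$. The trace $\pi$ is ultimately periodic, so as in \Cref{thm:SRpspacecomplete} we hard-code it into the instance as a small lasso automaton.

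The displayed formula contains only universal quantifiers. Its negation is therefore purely existential: there exist $\cont \in \TR(T)\cap\aut_S$ and $\pi'\in\TR(T)$ with
\begin{itemize}[nosep]
\item $\neg\varphi_{\cont}$ and $\pi'\not\equiv\cont$,
\item the agreement condition $\LTLglobally\bigwedge_{p\in\ip}\big((p_{\cont}\leftrightarrow p_\pi)\rightarrow(p_{\pi'}\leftrightarrow p_\pi)\big)$,
\item and $\neg\varphi_{\pi'}$.
\end{itemize}
This is an emptiness check on the product of two copies of $T$, $\aut_S$, the lasso for $\pi$, and the automaton for the LTL body over the self-composed alphabet. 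The product has size polynomial in $T$ and $S$ and exponential only in $\varphi$, and emptiness can be checked on the fly in NLOGSPACE in its size. This gives the standard PSPACE bound for alternation-free HyperLTL, equivalently LTL model checking on the self-composition.

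The definition of \MTCE{} also requires the converse inclusion: every \MCE{} $\tau$ must satisfy $\tau_{\downarrow_\ip}\in S$. We also need $S \subseteq (T\cap\aut_{\neg\varphi})_{\downarrow_\ip}$, but that is already enforced by the conjunct $\neg\varphi_{\cont}$. For the converse inclusion, a violation is a trace $\tau\in\TR(T)$ with
\begin{itemize}[nosep]
\item $\tau_{\downarrow_\ip}\notin S$,
\item $\tau\nvDash\varphi$,
\item and no strictly closer counterexample.
\end{itemize}
The last condition introduces an $\exists\forall$ pattern, so a naive check would cost exponential space.

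This converse inclusion is the main obstacle. I would handle it using the characterization implicit in the paper, namely that the set of \MCEs{} is $\omega$-regular. Concretely, I would guess a witness $\tau$ as a lasso of polynomially bounded length and then verify minimality of $\tau$ with the PSPACE procedure of \Cref{thm:cont}. The first thing to try is to establish a small-model bound on such violating witnesses through a pumping argument on the product automaton. If that bound fails, I would fall back on proving membership only for the inclusion direction expressed by the displayed formula, which is the condition the paper states. The reverse inclusion could then be checked by complementing $S$, which is exponential but explorable on the fly.

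\textbf{PSPACE-hardness.} We reuse the reduction of \Cref{thm:cont}: $T'$ with the fresh loops $\{\fresh_1\}^\omega,\{\fresh_2\}^\omega$, the formula $\varphi'=(\varphi\wedge\LTLeventually\neg\fresh_2)\vee\LTLglobally\fresh_1$, and $\pi=\{\fresh_1\}^\omega$. We take $S$ to be the one-lasso NBA accepting $(\{\fresh_2\})^\omega$.

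Then $S$ is a \MTCE{} iff $\{\fresh_2\}^\omega$ is the unique \MCE. This holds iff no trace of $T$ violates $\varphi$, that is, iff $T\vDash\varphi$. The argument has two parts:
\begin{itemize}[nosep]
\item If some trace of $T$ violates $\varphi$, it is a \CE. Its distance to $\pi$ is incomparable with that of $\{\fresh_2\}^\omega$, so some \MCE{} lies outside $S$.
\item If $T\vDash\varphi$, then $\{\fresh_2\}^\omega$ is the only \CE{} and hence the only \MCE.
\end{itemize}
The construction is polynomial, so LTL model checking reduces to verifying a \MTCE.
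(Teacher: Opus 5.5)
\textbf{Membership.} You are right that the definition of \MTCE{} is two-sided. Your membership argument for the displayed formula is exactly the paper's: negate it, obtain a purely existential condition, and check emptiness of the self-composed product on the fly. The paper stops there, so its proof only certifies that every member of $S$ is an \MCE{}. There is also a small slip. In your negation $\neg\varphi_{\cont}$ appears as a hypothesis, so this check does not enforce $S\subseteq(T\cap\aut_{\neg\varphi})_{\downarrow_\ip}$. You need a separate LTL check that no trace of $T$ with input projection in $S$ satisfies $\varphi$.

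The genuine gap is the converse inclusion: you flag it but do not prove it, and neither of your routes can work.
\begin{itemize}[nosep]
\item A polynomial lasso bound is false. A formula $\varphi$ of size polynomial in $n$ can force the unique violating trace of a constant-size $T$ to spell out an $n$-bit binary counter (serialized over a fixed set of inputs). That trace is then the unique \MCE{}, and it has no lasso shorter than $2^n$.
\item Complementing $S$ only handles the conjunct $\tau_{\downarrow_\ip}\notin S$. The real obstacle is the inner universal quantifier over strictly closer traces. Handling it requires complementing a projection of an automaton that is already exponential in $|\varphi|$.
\end{itemize}
This cost is intrinsic, not an artifact of the method. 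Take $S=\emptyset$; then the two-sided condition says that no \MCE{} exists. Use a one-state $T$ accepting all inputs and $\pi=\emptyset^\omega$, so that the \MCEs{} are exactly the $\subseteq$-minimal input words violating $\varphi$. Let the violating words be of two kinds:
\begin{itemize}[nosep]
\item the words $(x,\mathbf{1},\mathbf{1})$, with $x$ dual-rail encoded and all $y$-bits and $c$ always true;
\item the words $(x,y,c)$ with $\neg\psi(x,y)$, where $c$ is true infinitely often but not always.
\end{itemize}
Words of the second kind are never minimal, since dropping one $c$ gives a smaller violating word. Hence an \MCE{} exists iff $\exists x\,\forall y.\,\psi(x,y)$ holds over $\omega$-words, which is EXPSPACE-hard (serialize the inputs again so that $T$ stays constant-size). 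So PSPACE membership is only available for the one-sided condition that the displayed formula expresses. You should adopt that reading explicitly rather than leave the direction open.

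\textbf{Hardness.} Your reduction is also the paper's (\Cref{thm:cont} with $S=\{\cont\}$), but its key step fails. You claim that a trace of $T$ violating $\varphi$ yields an \MCE{} outside $S$. But a \CE{} need not lie above any \MCE{}, since $\subseteq$ on infinite $\Delta$-sets is not well-founded (compare \Cref{ex:infinite}).

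Moreover, the reduction of \Cref{thm:cont} does not by itself give that $\{\fresh_2\}^\omega$ is an \MCE{} iff $T\vDash\varphi$. The traces $\pi=\{\fresh_1\}^\omega$ and $\cont=\{\fresh_2\}^\omega$ agree on all original inputs, so $\Delta(\pi,\cont)=\nat\times\{\fresh_1,\fresh_2\}$. Only the all-empty-input trace of $T$ is at least as close to $\pi$.

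Concretely, take $\ip=\{i\}$, a one-state $T$ accepting all inputs, and $\varphi=\LTLeventually\LTLglobally\neg i$. Every violating trace of $T$ has infinitely many $i$-events, none of them in $\Delta(\pi,\cont)$. Dropping one of them gives a strictly closer \CE{}. Hence $\{\fresh_2\}^\omega$ is the unique \MCE{}, and $S$ is a \MTCE{} under either reading, although $T\nvDash\varphi$.

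The repair is to let the $\fresh_2$-branch read $\ip\cup\{\fresh_2\}$ at every step. Then $\cont$ disagrees with $\pi$ on every input proposition at every time, so $\Delta(\pi,\cont)=\nat\times(\ip\cup\{\fresh_1,\fresh_2\})$. Every violating trace of $T$ now refutes the minimality of $\cont$. If instead $T\vDash\varphi$, then $\cont$ is the only \CE{}. So the one-word set $S=\{(\ip\cup\{\fresh_2\})^\omega\}$ is a \MTCE{} iff $T\vDash\varphi$. This holds under either reading and does not rely on minimal elements existing.
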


\begin{proof}
    The two universal quantifiers only add a polynomial factor to the LTL model-checking problem, which is in PSPACE. 
    For PSPACE-hardness we can use the same proof as of \Cref{thm:cont}, with the set $S= \set{\cont}$.  \qed
\end{proof}

\subsection{Between Sufficient Reasons and Contrastive Explanations}

Previous works in the context of neural networks have established duality between \MSRs~and \MCEs~\cite{DBLP:conf/aiia/IgnatievNA020}. In the context of explaining temporal effects in reactive systems, this does not hold; while the notions of \MSRs~and \MCEs~are intuitively dual, we cannot directly derive, using a duality relation, one from the other. %For example, consider the trace $\pi = {}$
It is not enough to negate the events in a \MSR~to avoid the effect, since \MSRs~are not unique. 
One might suggest to negate \emph{all} \MSRs, however, this is too strict: it is enough to negate one event from each \MSR, to find a \CE. %
We formalize this 
%In the following, we establish the connection 
in \Cref{thm: msrs vs ces}. The proof is straightforward and appears in Appendix~\ref{app:proofs} for completeness.
Therefore, \MSRs~and \MCEs~are clearly related, but are not dual. The reason that duality fails here, while holds for the neural networks domain, is the nature of our problem: 
In other domains, such as~\cite{DBLP:conf/aiia/IgnatievNA020}, the only component needed for evaluation is the function $f$ which is explained. In our work, the function
we evaluate is the satisfaction of $\varphi$, but we have a second component that influences the decision, which is the system $T$ from which the traces are taken. 
%We explain the satisfaction of an LTL effect $\varphi$ on a trace, with respect to a system $T$. That is, the system determines the other possible traces with which $\pi$ is compared, however, the \emph{function} we evaluate is the satisfaction of $\varphi$. In other works, such as~\cite{DBLP:conf/aiia/IgnatievNA020}, both the evaluation, and the different inputs, are over the same function~$f$. 
%
\begin{theorem}\label{thm: msrs vs ces}
Given $T$, $\pi$ and $\varphi$, 
%Let $\pi$ be an ultimately periodic trace, let $\varphi$ be an LTL effect and let $T$ be a transition system. 
let $\mathfrak{R} = \{ R_i ~|~ i\in\mathfrak{I} \}$ be the set of all \MSRs~for $\phiTpi$. Then, for any \MCE~$\tau$ we have  
    $\forall i\in \mathfrak{I}: \Delta(\tau,\pi) \cap R_i \neq \emptyset$. 
%   
   % \item Let $S \subseteq \ip\times\nat$ such that   $\forall i\in \mathfrak{I}: S \cap R_i \neq \emptyset$. Then, there is a trace $\tau$ such that $\Delta(\tau,\pi) = S$ and such that $\tau$ is an \MCE~for $\phiTpi$.
%  
\end{theorem}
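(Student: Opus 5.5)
We argue by contradiction, using only the definitions of \SR~(\Cref{def:SR}), \MSR~and \MCE. The key point is that \MCE-minimality is never needed: any \CE~$\tau$, i.e. any $\tau\in\TR(T)$ with $\tau\nvDash\varphi$, must intersect every \SR, and in particular every \MSR. So we fix an arbitrary \MCE~$\tau$ and an arbitrary index $i\in\mathfrak{I}$. Suppose towards a contradiction that $\Delta(\tau,\pi)\cap R_i=\emptyset$.

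We then unfold the definition of $\Delta$. Since $\Delta(\tau,\pi)\subseteq\nat\times\ip$ contains exactly the input events on which $\tau$ and $\pi$ disagree, every $\tuple{t,a}\in R_i$ (with $a\in\ip$, as $R_i\subseteq\nat\times\ip$) is not in $\Delta(\tau,\pi)$. Hence $a\in\tau[t]$ iff $a\in\pi[t]$, that is, $\tau[t]=_{\{a\}}\pi[t]$ for all $\tuple{t,a}\in R_i$.

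Next we apply the fact that $R_i$ is in particular a \SR~for $\phiTpi$. Since $\tau\in\TR(T)$ agrees with $\pi$ on all events of $R_i$, \Cref{def:SR} gives $\tau\vDash\varphi$. This contradicts $\tau$ being a \CE, which requires $\tau\nvDash\varphi$. Therefore $\Delta(\tau,\pi)\cap R_i\neq\emptyset$, and since $i$ was arbitrary the claim holds for all $i\in\mathfrak{I}$.

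There is no real obstacle; the argument is a one-line unfolding of definitions. The only points to check are that $\Delta$ is symmetric, so $\Delta(\tau,\pi)=\Delta(\pi,\tau)$ regardless of argument order, and that both $\Delta$ and the \SRs~range over input events only, so the two sets are directly comparable. If $\mathfrak{I}=\emptyset$ (for instance in \Cref{ex:infinite}), the statement holds vacuously.
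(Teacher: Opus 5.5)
Your proof is correct and is essentially the paper's own argument. Assuming some $R_i$ is disjoint from $\Delta(\tau,\pi)$ means $\tau$ agrees with $\pi$ on every event of $R_i$; sufficiency of $R_i$ then gives $\tau\vDash\varphi$, contradicting that $\tau$ is a \CE. Your remark that neither the minimality of $\tau$ nor that of $R_i$ is used is accurate, and this is also implicit in the paper's proof.
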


\section{Evaluation}
\label{sec:experiments}
In this section, we provide a brief evaluation of our approach using a prototype implementation. 
%
%
%In this section, we evaluate the performance and correctness of our proposed approach for verifying and synthesizing sufficient reasons (both temporal and non temporal), as well as synthesizing minimal contrastive explanation. %To ensure a fair and understandable comparison, we utilize benchmark examples derived from the CATS \cite{DBLP:conf/atva/BeutnerFFS23}.
%Our evaluation relies on a proof-of-concept implementation of 
%the algorithms described in Sections~\ref{sec:nontemporal}-\ref{sec:ContrastiveExplanations}. 
Our preliminary implementation meant to demonstrate our approach, but this is not yet an automated tool. In particular, 
for the purpose of these experiments, the specific transition systems, input traces, and effects were statically defined (hard-coded) within each test. The causal analysis using HyperLTL model checking remains general. %This design choice facilitates consistent benchmarking and isolates algorithmic performance from parsing overhead; however, 
%The core verification and synthesis logic remains general. %We compare our results against existing baselines where applicable, specifically the CATS tool~\cite{DBLP:conf/atva/BeutnerFFS23} and the CORP tool~\cite{FFMS24}.
In the following, we provide brief comparisons. %For a more detailed evaluation, 
Additional tables appear in Appendix~\ref{app:tables}. 

We report the results of our prototype, and, when possible, a comparison to existing tools for \emph{actual} temporal causality. This comparison is meant to give intuition about how the different causality and explainability approaches (e.g. \TSR~vs. actual causality) translate to different causes and explanations. The comparisons are not intended as a way to evaluate which approach is better, since, as we discuss in the introduction, each approach addresses a different notion of causality. However, since there are no other tools for {temporal} causality in reactive systems, except for actual causality, we find the comparison interesting. 
% (see supplementary material).  

%
%\textbf{Setup and Reproducibility}
All experiments were conducted on a machine running Windows Subsystem for Linux (WSL, Kernel 4.4.0) with an {Intel Core i7-8565U CPU @ 1.80GHz} and {16 GB of RAM}. %The reported execution times are the average of 5 runs for Table~\ref{tab:cats_results1}, and 100 runs for Table~\ref{tab:crop_vs_mtsr} to ensure statistical stability. 
Our prototype is available at the anonymous repository \href{https://github.com/SubmissionForPaper1/research-artifact.git}{here}.\footnote{\url{https://github.com/SubmissionForPaper1/research-artifact.git}.}

Table~\ref{tab:cats_results1} compares %performance of 
our \TSR~verification approach against the CATS tool for actual causality~\cite{DBLP:conf/atva/BeutnerFFS23}, using CATS benchmarks. %The evaluation was conducted using a set of benchmarks derived from the CATS artifact.
The fact that 
%\subsubsection{Performance Analysis}
our approach outperforms CATS in terms of runtime stems from the
fact that \TSRs~are much easier to verify than temporal causes defined by actual causality.
%nature of our verification task: verifying that a candidate formula is a Temporal Sufficient Reason corresponds to checking a simple hyperproperty (as formalized in Eq.\eqref{eq:TSR}). Unlike CATS ~\cite{DBLP:conf/atva/BeutnerFFS23}, which must explore the solution space to verify the causal relation, our implementation verifies the sufficiency of a provided reason directly against the system model. 

%\input{table2}
%\input{table2-new}

%\paragraph{Synthesizing Minimal Temporal Sufficient Reasons}
\begin{comment}
    
\begin{wrapfigure}{r}{0.5\textwidth} % {r} for right side, 0.5 for half-page width
    \centering
   % \vspace{-15pt} % Removes excess whitespace above the figure
    \includegraphics[width=\linewidth]{figures/MSR_picture_inverted3.png}
    
    % The caption will be confined to the 0.5\textwidth width
    \caption{\textbf{\MSR~synthesis.} The part above CYCLE refers to the prefix of the trace, while the part below CYCLE refers to the $\omega$-repeating suffix.}
    \label{fig:msr_output}
    \vspace{-10pt} % Removes excess whitespace below the caption
\end{wrapfigure}
\end{comment}

In \Cref{tab:crop_vs_mtsr} we evaluate the synthesis of \MTSRs, as defined in \Cref{thm:minimaltraceproperty}, and compare it with the CORP tool~\cite{FFMS24}, which synthesizes temporal actual causes.

\Cref{fig:msr_output} presents the output of our prototype for \MSRs~synthesis, 
based on \Cref{sec: minimal sufficient reasons}. 
In \cite{coenen2022hyperproperty}, the authors provide an algorithm for finding non-temporal actual causes for HyperLTL effects. Our approach is incomparable to theirs, not only in the different nature of causes, but in two additional aspects: First, \cite{coenen2022hyperproperty} handles HyperLTL effects while we only handle LTL effects. Second, we provide non-temporal reasons in their trace representations, thus, we can find reasons as infinite sets of events on the trace (see \Cref{fig:msr_output}), while \cite{coenen2022hyperproperty} only provides finite sets of events as causes. 

Since the focus of this work is sufficient reasons, we refer the reader to the \href{https://github.com/SubmissionForPaper1/research-artifact.git}{git repository} for evaluation of \MCEs~synthesis.

\begin{table}[t] % Standard 'table' for one-column papers
    \centering
    
    % --- LEFT TABLE ---
    \begin{minipage}[t]{0.54\textwidth}
        \centering
        % This treats TAB1.tex like a figure and scales it to fit 49% of the page width
        \resizebox{\linewidth}{!}{
            %\begin{table}[t]
%\begin{wraptable}{r}{0.6\textwidth} % CHANGE 1: Use 'wraptable' instead of 'table'. {r} = right side, 0.6\textwidth = width of the wrap.

%\centering 
 %   \small 
  %  \setlength{\tabcolsep}{3pt} % Shrinks the space between columns significantly
   % \renewcommand{\arraystretch}{1.0} % Standard breathing room
    
    \begin{tabular}{@{} l l r c r c @{}}
        \toprule
        & & \multicolumn{2}{c}{\textbf{TSR}} & \multicolumn{2}{c}{\textbf{CATS}} \\
        \cmidrule(lr){3-4} \cmidrule(l){5-6}
        
        \textbf{Instance} & \textbf{Cause} & \textbf{t(s)} & \textbf{Res} & \textbf{t(s)} & \textbf{Res} \\
        \midrule
        
        Arbiter     & $r1$ (odd pos.)                                                               & 0.41 & \checkmark & 20.02 & \checkmark \\
        Arbiter\_spu & $\LTLeventually r1$                                                          & 0.38 & \checkmark & 0.67  & \xmark     \\
        Ex1         & $i1 \land \LTLnext\LTLnext(\neg i2 \mathbin{\mathcal{U}} (i1 \land \neg i2))$ & 0.40 & \checkmark & 1.01  & \checkmark \\
        Ex1\_mod     & $i1 \land \LTLnext\LTLnext(\neg i2 \mathbin{\mathcal{U}} (i1 \land \neg i2))$ & 0.40 & \checkmark & 1.02  & \xmark     \\
        Ex6\_glob    & $\LTLglobally r0$                                                            & 0.40 & \checkmark & 0.74  & \checkmark \\
        Ex8         & $a \lor \LTLnext a$                                                          & 0.42 & \checkmark & 0.98  & \checkmark \\
        TP\_Left     & $\neg a$                                                                     & 0.38 & \checkmark & 0.52  & \checkmark \\
        
        \bottomrule
    \end{tabular}
%\end{table}
%\end{wraptable} % CHANGE 2: Close the 'wraptable' instead of 'table'
        }
         \captionof{table}{ \textbf{Verification of \TSRs.} ($\checkmark$) indicates the tool considers it as a cause / reason. We note that CATS consider \emph{actual causality} while we (TSR) consider \SRs, therefore, it might be that one tool considers a formula as a correct explanation while the other does not. }
    \label{tab:cats_results1}
    \end{minipage}
    \hfill % This pushes the tables apart to use the full width of your single column
    % --- RIGHT TABLE ---
    \begin{minipage}[t]{0.44\textwidth}
        \centering
        % Scaled to fit the other 49% of the page width
        \resizebox{\linewidth}{!}{
            %\begin{table}[t]
  %  \centering
   % \footnotesize
    %\setlength{\tabcolsep}{3.5pt} % Tightens horizontal gutters significantly
    %\renewcommand{\arraystretch}{1.0} % Standard row height for compactness

    \begin{tabular}{@{} l l c c c c @{}}
        \toprule
        & & \multicolumn{2}{c}{\textbf{MTSR }} & \multicolumn{2}{c}{\textbf{CORP} } \\
        \cmidrule(lr){3-4} \cmidrule(l){5-6} 
        
        \textbf{Instance} & \textbf{Effect} & \textbf{$|\mathcal{A}|$} & \textbf{t(ms)} & \textbf{$|\mathcal{B}|$} & \textbf{t(ms)} \\
        \midrule
        
        Arbiter       & $\LTLeventually g_1$ & 3 & 12.3 & 3 & 72.2 \\
        Arbiter\_sim  & $\LTLnext g_1$ & 3 & 12.4 & 3 & 72.8 \\
        Ex1           & $\LTLeventually \text{error}$ & 4 & 12.8 & 4 & 80.9 \\
        Ex6\_glob     & $\LTLglobally \neg g_1$ & 2 & 12.6 & 2 & 75.0 \\
        Ex8\_mod      & $\LTLnext\LTLnext \text{error}$ & 3 & 12.9 & 2 & 74.1 \\
        TP\_Left      & $\LTLeventually e$ & 1 & 12.1 & 1 & 73.2 \\
        
        \bottomrule
    \end{tabular}
   
%\end{table}
        }
         \captionof{table}{\textbf{Synthesizing \MTSRs.} The \emph{Effect} column specifies the property that the generated cause explains. $|\mathcal{A}|$ and $\mathcal{B}$ denote the number of states in the generated NBA for MTSR and CORP, respectively.}
    \label{tab:crop_vs_mtsr}
    \end{minipage}
    \vspace{-3mm}
\end{table}

\begin{comment}
    
\begin{figure}[t]
    \centering
        \includegraphics[width=0.6\linewidth]{MSR_picture_inverted3.png}
    % ----------------------------
    \caption{\textbf{\MSR~synthesis.} The part above CYCLE refers to the prefix of the trace, while the part below CYCLE refers to the $\omega$-repeating suffix. }%This way, we provide representations of infinite sets as non-temporal reasons.}
    \label{fig:msr_output}
\end{figure}
\end{comment}

\section{Related Work}
\emph{Sufficient reasons and contrastive explanations:}
\cite{DBLP:journals/pacmmod/BarceloKRSV25} defined sufficient reasons and contrastive explanations to explain $k$-nearest neighbor classification, and analyzed complexities for different scenarios of the problem of verifying explanations, and finding them. Inspired by their work, and motivated by the need to unify explainability notions between different research areas, we have lifted their definitions to the temporal setting.
Additional works, such as 
\cite{DBLP:conf/aaai/0001I22,DBLP:journals/jolli/DarwicheH23,DBLP:conf/aaai/IgnatievNM19,DBLP:conf/aaai/DarwicheJ22}
consider sufficient reasons and contrastive explanations as formal explanations in the context of AI, which supports our motivation to define them also in the setting of formal verification and temporal specifications.

\emph{Causality and explainability for temporal properties:} 
Many works have studied the explainability problem for temporal properties~\cite{FFMS24,FS23,FFS25,BaierDFJMPZ21,Leitner-FischerL13a}, however non of them studies \SRs~or \TSRs~for reactive systems. Several works studied
 counterfactual reasoning over finite traces or over limited temporal effects \cite{DBLP:journals/sttt/GroceCKS06,DBLP:conf/popl/BallNR03,DBLP:journals/corr/abs-2310-01001}, while we define \CEs~over infinite traces and general LTL effects. 
%Most of these works have considered actual causality~\cite{HalpernP05a}.
\cite{FS23} consider counterfactual theory, and present a new temporal logic to reason about counterfactual worlds. This is orthogonal to our work, that defines (T)\CEs~for reactive systems, and aims at using existing formalisms to reason about them.

The only line of work that have studied in depth both temporal effects and temporal explanations for reactive systems over infinite words are the extensions of Halpern and Pearl's actual causality \cite{HalpernP05a} to the temporal setting. This line of works provided: definitions of non-temporal actual causality \cite{coenen2022hyperproperty}; definitions of temporal actual causality \cite{DBLP:conf/atva/CoenenFFHMS22}; a tool implementation \cite{DBLP:conf/atva/BeutnerFFS23}; synthesis of temporal actual causality \cite{FFMS24}; and complexity analysis \cite{LICS25}. In our work, we approach all of these aspects for other notions of explainability: (M)\SRs~and \CEs, and provide definitions, algorithms, complexity analysis, and a prototype implementation. We note that~\cite{coenen2022hyperproperty} only finds causes as finite sets of events, while we manage to find infinite sets of events, using an ultimately periodic representation. 
%Previous works have considered other notions of temporal causality, such as actual causality of Halpern and Pearl~\cite{cite}, 

%and counterfactual theory~\cite{cite}. 
%In this work, we have lifted known and practically used notions of explanations, provided detailed complexity analysis, that was lacking from almost all previous work on temporal causality, and demonstrated our approach on a set of benchmarks. 

%In contrast to the counterfactual operators suggested by~\cite{FS23}, that defines new counterfactual operators and a new logical framework, here, we are interested in finding simpler solutions, and in their complexity analysis, thus lifting the work of~\cite{DBLP:journals/pacmmod/BarceloKRSV25} to the temporal setting.

\section{Conclusions}

Motivated by the need to formalize different explainability notions and 
unify~these notions across different fields, we study sufficient reasons and contrastive explanations for reactive systems given LTL effects. We provide formal definitions~for the non-temporal and temporal variants of the problem, together with algorithms and complexity analysis for verification and synthesis of the different variants. 

Sufficient reasons provide another view on explainability of reactive systems, and we believe that exploring more notions of explainability, and formally defining them, would allow better understanding of systems and their behavior. 

In the future, we plan to develop our prototype into a fully automated tool that can be used for computing (minimal, temporal) sufficient reasons and contrastive explanations 
automatically. We also plan to study the relations and derivations between the different causality and explainability notions.

%% The file kr.bst is a bibliography style file for BibTeX 0.99c
\bibliographystyle{plainurl}
\bibliography{bib}
%\clearpage

\newpage

\renewcommand{\theHsection}{A\arabic{section}} % Prefixes internal link names with "A"
\appendix

%\section*{Supplementary Material for Paper \#383 Sufficient Reasons and Contrastive Explanations for Reactive Systems}

%For ease of readability, we begin each appendix section in a new page.
%\hadar{check references}

%\hadar{read again, make examples better}

%\hadar{add to future work?}
\section{LTL and HyperLTL}\label{app:HyperLTL}

We provide here the full definition of LTL and its semantics, and then of HyperLTL and its semantics. 

\paragraph{LTL} 
Linear Temporal Logic (LTL~\cite{Pnueli77}) is defined over the set of atomic propositions $\ap$ using the following grammar: 
%\begin{equation*}
\begin{align*}
  \psi := p ~|~ \psi \vee \psi ~|~ \neg\psi ~|~ \LTLnext \psi ~|~ \psi\LTLuntil \psi 
\end{align*}
%\tag{$\star$}\label{eq:hyperLTL}
%\end{equation*}
%
Where $p\in \ap$, $\LTLnext$ is the temporal operator \emph{next}, and $\LTLuntil$ is the temporal operator \emph{until}. We use the usual derived Boolean operators $\wedge, \rightarrow, \leftrightarrow$, and the derived temporal operators $\LTLeventually \psi = \mathit{true}\LTLuntil \psi$  (\emph{eventually} $\psi$) and $\LTLglobally \psi= \neg \LTLeventually \neg \psi$ (\emph{globally} $\psi$). 

We define when an infinite word $\alpha\in (2^\ap)^\omega $
satisfies an LTL formula $\varphi$ inductively, as follows: 
\begin{align*}
       &\alpha \vDash p&\text{ iff } & p\in \alpha[0] \\ 
    &\alpha \vDash \neg \psi &\text{ iff } &\alpha \nvDash \psi \\
    & \alpha\vDash \psi_1 \vee \psi_2  &\text{ iff } &  \alpha \vDash \psi_1 \text{ or } \alpha \vDash \psi_2 \\
       &\alpha \vDash \LTLnext \psi &\text{ iff } & \alpha[1,\infty) \vDash \psi \\ 
      &\alpha \vDash \psi_1 \LTLuntil \psi_2 &\text{ iff } & \exists t_2\in\mathbb{N} \text{ s.t. } \alpha[t_2,\infty) \vDash \psi_2 \\ &&&\text{ and } \forall t_1< t_2: \alpha[t_1,\infty) \vDash \psi_1 
\end{align*}
We say that a system $T$ satisfies an LTL formula $\varphi$, denoted by $T\vDash \varphi$ iff $\forall \alpha \in \TR(T): \alpha\vDash\varphi$.

The language of an LTL formula $\varphi$ is the set of all word that satisfy $\varphi$: 
$\mathcal{L}(\varphi) = \{ \alpha\in (2^\ap)^\omega~|~ \alpha\vDash\varphi \} $. 

It holds that every LTL formula $\varphi$ can be translated to a non-deterministic Büchi automaton $\aut$ such that $\mathcal{L}(\varphi) = \mathcal{L}(\aut)$~\cite{DBLP:conf/lics/VardiW86}. We denote this automaton by $\aut_\varphi$ and will use this notation throughout the paper.

\paragraph{HyperLTL}\label{sec:prelim:hyperQPTL}
HyperLTL~\cite{ClarksonFKMRS14} extends LTL with quantification over traces, 
allowing it
to express relational properties over infinite executions. 

The syntax of HyperLTL 
is defined using the following grammar,
given a finite set $\mathcal{V}$ of trace variables. %We ignore the 
\begin{equation*}
\begin{aligned}
  &\psi := p_\pi ~|~ \psi \vee \psi ~|~ \neg\psi ~|~ \LTLnext \psi ~|~ \psi\LTLuntil \psi \\
  &\varphi := \exists \pi. \varphi ~|~ \forall \pi. \varphi ~|~ \psi  %~|~ \textcolor{blue}{\forall q. \varphi}
\end{aligned}
%\tag{$\star$}\label{eq:hyperLTL}
\end{equation*}
Where $\pi\in \mathcal{V}$ is a trace variable. That is, HyperLTL formulas are LTL formulas where atomic propositions are augmented by traces variables, and, with a prefix of existential and universal trace quantifiers.

The \emph{semantics} of HyperLTL are given with respect to a set of traces $\mathcal{T}$ and a trace assignment $\Pi: \mathcal{V}\rightarrow \mathcal{T}$. Given a trace variable $\pi\in\mathcal{V}$ and a trace $\tau\in\mathcal{T}$, we denote by $\Pi[\pi\mapsto \tau]$ the trace assignment that coincides with $\Pi$ on all variables, except for $\pi$ which is mapped to $\tau$. 
For the assignment $\Pi$, we denote by $\Pi[t, \infty)$ the assignment that maps every variable $\pi \in \mathcal{V}$ to the suffix of $\Pi(\pi)$ starting from position $t$, that is: $\Pi(\pi)[t]\Pi(\pi)[t+1]\Pi(\pi)[t+2]\cdots$. The semantics of HyperLTL is defined inductively as follows. 
\begin{align*}
     &\Pi, \mathcal{T} \vDash p_\pi &\text{ iff } & p\in\Pi(\pi)(0) \\ 
    &\Pi, \mathcal{T} \vDash \neg \psi &\text{ iff } &\Pi,\mathcal{T} \nvDash \psi \\
    & \Pi, \mathcal{T} \vDash \psi_1 \vee \psi_2  &\text{ iff } &  \Pi, \mathcal{T} \vDash \psi_1 \text{ or } \Pi, \mathcal{T} \vDash \psi_2 \\
       &\Pi, \mathcal{T} \vDash \LTLnext \psi &\text{ iff } & \Pi(\pi)[1,\infty),\mathcal{T} \vDash \psi \\ 
      &\Pi, \mathcal{T} \vDash \psi_1 \LTLuntil \psi_2 &\text{ iff } & \exists t_2\in\mathbb{N} \text{ s.t. } \Pi[t_2,\infty),\mathcal{T} \vDash \psi_2 \\ &&&\text{ and } \forall t_1< t_2: \Pi[t_1,\infty),\mathcal{T} \vDash \psi_1 \\ 
      &\Pi, \mathcal{T} \vDash \exists \pi. \varphi &\text{ iff } & \text{ there exists } \tau\in\mathcal{T} \text{ s.t. } \Pi[\pi\mapsto \tau],\mathcal{T}\vDash \varphi \\
       &\Pi, \mathcal{T} \vDash \forall \pi. \varphi &\text{ iff } & \text{ forall } \tau\in\mathcal{T}: \Pi[\pi\mapsto \tau],\mathcal{T}\vDash \varphi 
\end{align*}
We say that $\mathcal{T} $ satisfies $\psi$, denoted $\mathcal{T} \vDash \psi$, if $\Pi_\emptyset, \mathcal{T} \vDash \psi$ where $\Pi_\emptyset: \mathcal{V}\rightarrow \emptyset$ is the empty assignment, meaning that $\psi$ contains no free trace variables that need to be assigned.
We say that a transition system $T$ satisfies a HyperLTL formula~$\psi$ if $\TR(T)$ satisfies $\psi$.

%\newpage
\section{Full Proofs}\label{app:proofs}

\paragraph{\Cref{lemma:RiffTau} (restated)}
Let $R \subseteq \mathbb{N}\times \ip$ and let $\tau_R $
be the trace representation of $R$. 
%, such that: $R = \{ i\in \mathbb{N}~|~ \on\in\tau_R[i]  \}$. 
Then, $f[R, \pi] $ 
holds in $T$
iff $ \psi_{sr}[\tau_R, \pi]$ holds in $T_\aug$. 
%\Cref{eq:reasonR} holds with respect to $\tuple{T, \pi}$ iff \Cref{eq:reasonTau} holds with respect to $\tuple{T_\aug, \pi}$. 

\begin{proof}
    Using $\R_{\tau_R}$ we require that $\tau_R$ is a trace representation of some set $R$.

    We have that $\TR(T_\aug) \cap (2^\ap)^\omega = \TR(T)$. In particular, for $\pi' \in \TR(T_\aug)$, we have  $\neg \R_{\pi'} \leftrightarrow \pi'\in\TR(T)$, thus, in $ \psi_{sr}[\tau_R, \pi]$ we only consider traces $\pi'$ of the system $T$. 
    The condition 
    $  \LTLglobally   \bigwedge_{a\in \ip}
    \big(   
    a_{\tau_R} \rightarrow (\pi =_{\{a\}} \pi')
    \big)$
   % $ \LTLglobally \big(   \on_{\tau_R} \rightarrow (\pi =_\ip \pi')    \big)$ 
    is equivalent to 
    $\forall (i,a)\in R. \pi'[i] =_{\{a\}} \pi[i]$,
    %$\forall i. i\in R \rightarrow \pi'[i]=_\ip \pi[i]$, 
    which is exactly the condition in the definition of $f[R,\pi]$. \qed
\end{proof}

\paragraph{\Cref{thm:pspacetemporal} (restated)}
    Given a trace property $P_R$, as a Büchi automaton
    or an LTL formula,
    verifying if $P_R$ is a temporal sufficient reason for an LTL effect $\varphi$ in $\tuple{T, \pi}$ is PSPACE-complete. 

\begin{proof} 
    Membership in PSPACE: 
First, assume that $P_R$ is given as an NBA over $2^\ip$.
    Let $\mathcal{A}_{P_R}$ be the extension of $P_R$ to the full set $2^\ap$ (c.f. \Cref{def:extension}).
    Then,   \Cref{eq:buchiP} reduces to model-checking whether $\mathcal{A}_{P_R} \cap T \vDash\mathcal{A}_\varphi$.

    If $P_R$ is an LTL property $\varphi_R$,  then, to verify that $\varphi_R$ is a temporal sufficient reason for $\varphi$ in $\tuple{T,\pi}$ we need to model-check \Cref{eq:TSR} against $T$, which is purly an LTL model-checking problem.

    PSPACE hardness: we can reduce the LTL model-checking problem to the verification of temporal sufficient reasons, with almost the same proof as the proof of \Cref{thm:SRpspacecomplete}, but with the change that we are given the trace property $P_R = \mathit{true}$ rather then the empty set of events (an automaton that accepts all words has one state, and an LTL formula is of constant size):
       $\mathit{true}$ is a temporal sufficient reason for some LTL property $\varphi$ in $\tuple{T, \pi}$ iff all traces of $T$ satisfy $\varphi$. We now provide the full proof following the idea of \Cref{thm:SRpspacecomplete}.

        We show a reduction from LTL model-checking: Given a transition system $T$, and an LTL formula $\varphi$, we construct an instance: $T'$, $\pi$, $\varphi'$ and $P_R$ in polynomial time, such that $T\models \varphi$ iff $P_R$ is a sufficient reason for $\varphi'$ in $\tuple{T', \pi}$.
    \begin{itemize}[nosep]
        \item We construct $T'$ from $T$ by adding a fresh state $q_\fresh \notin Q$, and the transitions 
        $\delta(q_0, \{ \fresh\})=( q_\fresh, \emptyset)$, $\delta(q_\fresh, \{ \fresh\})=( q_\fresh, \emptyset)$
        %$(q_0, \{ \fresh\}, q_\fresh), (q_\fresh, \{ \fresh\}, q_\fresh)$ 
        for a fresh input proposition $\fresh \notin \ap$. 
     We denote $\pi_\fresh = \{\fresh\}^\omega$, and
    we have that that $\TR(T') = \TR(T) \cup \{ \pi_\fresh \}$.
    \item  We define $\varphi' := \varphi \vee \LTLglobally \fresh$. Then, we have the following:
     \begin{itemize}[nosep]
         \item $\pi_\fresh \vDash \varphi'$. 
         \item $\forall \pi' \in \TR(T)$: $\pi'\nvDash \LTLglobally \fresh$. Therefore: 
         \item $\forall \pi' \in \TR(T)$: $\pi'\vDash \varphi $ iff $\pi'\vDash \varphi' $.
     \end{itemize}
     \item We choose $\pi = \pi_\fresh$.
     \item We choose $P_R = \true$.
    \end{itemize}
    The construction of $T'$ is linear in $T$ (both in terms of time and space), and so is the construction of $\varphi'$. $\pi_\fresh$ and $P_R$ are of constant size, so all together we have a polynomial construction. 

    We are now left to show that 
     $T\models \varphi$ iff $\true$ is a temporal sufficient reason for $\varphi'$ in $\tuple{T', \pi_\fresh}$. We have that 
     $\true$ is a temporal sufficient reason for $\varphi'$ in $\tuple{T', \pi_\fresh}$ iff $\varphi'$ holds on all traces in $\TR(T')$. Since $\varphi'$ holds on $\pi_\fresh$ regardless of $T$ and $\varphi$, and from the construction of $T'$, we have that $\varphi'$ holds on all traces in $\TR(T')$ iff $\varphi$ holds on all traces in $\TR(T)$, as needed. \qed
\end{proof}

\paragraph{\Cref{thm: msrs vs ces} (restated)}
   Let $\pi$ be an ultimately periodic trace, let $\varphi$ be an LTL effect and let $T$ be a transition system. Let $\mathfrak{R} = \{ R_i ~|~ i\in\mathfrak{I} \}$ be the set of all \MSRs~for $\phiTpi$. Note that $\mathfrak{R}$ might be uncountable, or empty. Then, for any \MCE~$\tau$ we have  
    $\forall i\in \mathfrak{I}: \Delta(\tau,\pi) \cap R_i \neq \emptyset$. 

  %  \item Let $S \subseteq \ip\times\nat$ such that   $\forall i\in \mathfrak{I}: S \cap R_i \neq \emptyset$. Then, there is a trace $\tau$ such that $\Delta(\tau,\pi) = S$ and such that $\tau$ is an \MCE~for $\phiTpi$.

\begin{proof}
       Let $\tau$ be an \MCE, and assume by a way of contradiction that there is $i\in\mathfrak{I}$ such that $R_i \cap \Delta(\tau,\pi) = \emptyset$. This means that $\pi$ agrees with $\tau$ on all events in $R_i$, and since $R_i$ is an \MSR, we have that $\tau \vDash \varphi$, contradicting the fact that $\tau$ is a \CE. \qed
\end{proof}

\newpage
\section{Additional Tables for Experimental Evaluation} \label{app:tables}
\begin{figure}[H]
    \centering
            \includegraphics[width=0.97\linewidth]{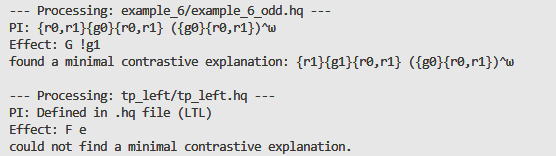}

    \caption{MCE: \textbf{Synthesizing a Minimal Contrastive Explanation.} These results align with the definitions.}
    \label{fig:mce_outputs}
\end{figure}
\vspace{-3mm}

\begin{table}[H]
    \centering
    \small % Fits better in columns
    \setlength{\tabcolsep}{2pt} % Tight columns
    \renewcommand{\arraystretch}{1.1} % slightly more breathing room
    
    % Use tabularx to let the 'Cause' column fill remaining space
    \begin{tabularx}{\columnwidth}{@{} l >{\raggedright\arraybackslash}X r c r c @{}}
        \toprule
        & & \multicolumn{2}{c}{\textbf{TSR}} & \multicolumn{2}{c}{\textbf{CATS}} \\
        \cmidrule(lr){3-4} \cmidrule(l){5-6} % Nice partial lines
        
        \textbf{Instance} & \textbf{Cause} & \textbf{t (s)} & \textbf{Res} & \textbf{t (s)} & \textbf{Res} \\
        \midrule
        
        Arbiter          & $r1$ (odd pos.)                                              & 0.41 & \checkmark     & 20.02 & \checkmark \\
        Arbiter\_sim     & $\LTLnext r1$                                                       & 0.38 & \checkmark & 0.95  & \checkmark \\
        Arbiter\_spu     & $\LTLeventually r1$                                          & 0.38 & \checkmark     & 0.67  & \xmark     \\
        Ex1              & $i1 \land \LTLnext\LTLnext(\neg i2 \mathbin{\mathcal{U}} (i1 \land \neg i2))$ & 0.40 & \checkmark & 1.01  & \checkmark \\
        Ex1\_mod         & $i1 \land \LTLnext\LTLnext(\neg i2 \mathbin{\mathcal{U}} (i1 \land \neg i2))$ & 0.40 & \checkmark & 1.02  & \xmark     \\
        Ex6\_glob        & $\LTLglobally r0$                                            & 0.40 & \checkmark & 0.74  & \checkmark \\
        Ex6\_odd         & $r0$ (even pos.)                                             & 0.41 & \checkmark & 2.32  & \checkmark \\
        Ex8              & $a \lor \LTLnext a$                                                 & 0.42 & \checkmark & 0.98  & \checkmark \\
        Ex8\_mod         & $a \lor \LTLnext a$                                                 & 0.41 & \checkmark & 0.99  & \checkmark \\
        TP\_Left         & $\neg a$                                                     & 0.38 & \checkmark & 0.52  & \checkmark \\
        TP\_Right        & $\neg a$                                                     & 0.98 & \checkmark & 1.86  & \checkmark \\
        
        \bottomrule
    \end{tabularx}
    \caption{
    Full version of \Cref{{tab:cats_results1}}: 
     \textbf{Verification of Temporal Sufficient Reasons (TSR).} The checkmark ($\checkmark$) indicates the tool considers it as a cause / reason. We note that CATS consider \emph{actual causality} while we (TSR)  consider \emph{sufficient reasons}, therefore, it might be that one tool considers a formula as a correct explanation while the other does not. }%\footnotetext{\cite{DBLP:conf/atva/BeutnerFFS23}}
\end{table}
\vspace{-3mm}

\begin{table}[h]
    \centering
    \footnotesize
    \setlength{\tabcolsep}{6pt} % Balanced spacing
    \renewcommand{\arraystretch}{1.4}

    \begin{tabularx}{\columnwidth}{@{} l l c c c c @{}}
        \toprule
        & & \multicolumn{2}{c}{\textbf{MTSR }} & \multicolumn{2}{c}{\textbf{CORP} } \\
        \cmidrule(lr){3-4} \cmidrule(l){5-6} 
        
        \textbf{Instance} & \textbf{Effect} & \textbf{$|A|$} & \textbf{Time (ms)} & \textbf{$|B|$} & \textbf{Time (ms)} \\
        \midrule
        
        Arbiter       & $\LTLeventually g_1$ & 3 & 12.3 & 3 & 72.2 \\
        Arbiter\_sim  & $\LTLnext g_1$ & 3 & 12.4 & 3 & 72.8 \\
        Arbiter\_spu  & $\LTLeventually g_1$ & 1 & 12.0 & 1 & 72.9 \\
        Ex1           & $\LTLeventually \text{error}$ & 4 & 12.8 & 4 & 80.9 \\
        Ex1\_mod      & $\LTLeventually o$ & 4 & 12.9 & 4 & 75.1 \\
        Ex6\_glob     & $\LTLglobally \neg g_1$ & 2 & 12.6 & 2 & 75.0 \\
        Ex6\_odd      & $\LTLglobally \neg g_1$ & 2 & 12.6 & 2 & 75.6 \\
        Ex8           & $\LTLnext\LTLnext \text{error}$ & 3 & 12.7 & 3 & 74.1 \\
        Ex8\_mod      & $\LTLnext\LTLnext \text{error}$ & 3 & 12.9 & 2 & 74.1 \\
        TP\_Left      & $\LTLeventually e$ & 1 & 12.1 & 1 & 73.2 \\
        TP\_Right     & $\LTLeventually e$ & 2 & 12.1 & 2 & 74.5 \\
        
        \bottomrule
    \end{tabularx}
    \caption{
    Full version of  \Cref{tab:crop_vs_mtsr}:
    \textbf{Synthesizing a Minimal Temporal Sufficient Reason.} The \emph{Effect} column specifies the property that the generated cause explains. $|A|$ and $|B|$ denote the number of states in the generated witness automaton for MTSR and CORP, respectively.}
\end{table}

\begin{table}[H]
    \centering
    \footnotesize
    \renewcommand{\arraystretch}{1.4} % More vertical space for formulas
    \setlength{\tabcolsep}{3pt}

    \begin{tabularx}{\columnwidth}{@{} l >{\raggedright\arraybackslash}X >{\raggedright\arraybackslash}X @{}}
        \toprule
        \textbf{Instance} & \textbf{Cause $L(A)$ (MTSR)} & \textbf{Cause $L(B)$ (CORP)} \\
        \midrule
        
        Arbiter       & $\neg r_1 \land \neg r_0 \land \LTLnext(\LTLeventually r_1)$ & NBA \\
        Arbiter\_sim  & $\neg r_1 \land \neg r_0 \land \LTLnext r_1$ & $\LTLnext r_1$ \\
        Arbiter\_spu  & TRUE & TRUE \\
        Ex1           & $i_1 \land \LTLnext\LTLnext(\neg i_2 \mathcal{U} (i_1 \land \neg i_2))$ & $i_1 \land \LTLnext\LTLnext(\neg i_2 \mathcal{U} (i_1 \land \neg i_2))$ \\
        Ex1\_mod      & $i_1 \land \LTLnext\LTLnext(\neg i_2 \mathcal{U} (i_1 \land \neg i_2))$ & $i_1 \land \LTLnext\LTLnext(i_1 \land \neg i_2)$ \\
        Ex6\_glob     & NBA & $r0$ (even pos.) \\
        Ex6\_odd      & NBA & $r0$ (even pos.) \\
        Ex8           & $a \lor \LTLnext a$ & $a \lor \LTLnext a$ \\
        Ex8\_mod      & $a \lor \LTLnext a$ & $a$ \\
        TP\_Left      & $\LTLglobally \neg a$ & TRUE \\
        TP\_Right     & $\neg a$ & $\neg a$ \\
        
        \bottomrule
    \end{tabularx}
    \caption{\textbf{Synthesizing a Minimal Temporal Sufficient Reason.}
 Comparison of the logical formulas derived from the automata of \Cref{tab:crop_vs_mtsr}. Note that in some instances, the cause can only be represented as an NBA and does not have an equivalent LTL formula.
    Both our MTSR implementation and the CORP tool output explicit NBAs that characterize the language of the minimal temporal sufficient reason (MTSR)/actual temporal cause (CORP). 
The LTL formulas were derived manually, to provide a 
%To facilitate 
a direct semantic comparison (as done
in n (Finkbeiner et al. 2024)
to evaluate CORP).  }
\end{table}

\begin{table}[t]
    \centering
    \footnotesize
    \setlength{\tabcolsep}{6pt}
    \renewcommand{\arraystretch}{1.6}
    \setcounter{table}{5}% render this as Table 6
    \ifdefined\mcetblbox\else\newsavebox{\mcetblbox}\fi
    \sbox\mcetblbox{%
    \begin{tabular}{@{} l l l l c c @{}}
        \toprule
        \textbf{Instance} & \textbf{Effect} & \textbf{\Large $\pi$} & \textbf{MCE} & \textbf{$|\mathrm{MCE}|$} & \textbf{t (s)} \\
        \midrule
        Ex6\_glob & $\LTLglobally \neg g_{1}$ & $(\{r_{0},r_{1}\}\{g_{0}\})^{\omega}$ & $\{r_{1}\}\{g_{1}\}\,(\{r_{0},r_{1}\}\{g_{0}\})^{\omega}$ & 4 & 1.02 \\
        Ex6\_odd & $\LTLglobally \neg g_{1}$ & $(\{r_{0},r_{1}\}\{g_{0}\})^{\omega}$ & $\{r_{1}\}\{g_{1}\}\,(\{r_{0},r_{1}\}\{g_{0}\})^{\omega}$ & 4 & 0.95 \\
        TP\_Left & $\LTLeventually e$ & $(\{e\})^{\omega}$ & \textemdash & - & 0.81 \\
        TP\_Right & $\LTLeventually e$ & $(\{e\})^{\omega}$ & $\{a\}\,(\{\})^{\omega}$ & 2 & 0.82 \\
        \bottomrule
    \end{tabular}%
    }%
    \begin{minipage}{\wd\mcetblbox}
        \centering
        \usebox\mcetblbox
        \caption{\textbf{Synthesizing a Minimal Contrastive Explanation (MCE).} $\pi$ is the original trace where the Effect holds; \emph{MCE} is the synthesized minimal counterfactual trace $\tau$ (\textemdash{} when no counterfactual exists). }
    \end{minipage}
\end{table}
\begin{table}[t]
    \centering
    \footnotesize
    \setlength{\tabcolsep}{6pt}
    \renewcommand{\arraystretch}{1.6}
    \setcounter{table}{6}% render this as Table 7
    \ifdefined\msrtblbox\else\newsavebox{\msrtblbox}\fi
    \sbox\msrtblbox{%
    \begin{tabular}{@{} l l l c c @{}}
        \toprule
        \textbf{Instance} & \textbf{Effect} & \textbf{MSR} & \textbf{$|R|$} & \textbf{t (s)} \\
        \midrule
        Arbiter & $\LTLeventually g_{1}$ & $\{(r_{1},3)\}$ & 1 & 6.45 \\
        Arbiter\_sim & $\LTLnext g_{1}$ & $\{(r_{1},1)\}$ & 1 & 7.01 \\
        Arbiter\_spu & $\LTLeventually g_{1}$ & $\emptyset$ & 0 & 4.83 \\
        Ex1 & $\LTLeventually o$ & $\{(i_{1},0), (i_{2},2), (i_{2},3), (i_{1},4), (i_{2},4)\}$ & 5 & 4.88 \\
        Ex1\_mod & $\LTLeventually o$ & $\{(i_{1},0), (i_{1},2), (i_{2},2)\}$ & 3 & 5.17 \\
        Ex8 & $\LTLnext \LTLnext \text{error}$ & $\{(a,1)\}$ & 1 & 1.11 \\
        Ex8\_mod & $\LTLnext \LTLnext \text{error}$ & $\{(a,0)\}$ & 1 & 1.14 \\
        Ex6\_glob & $\LTLglobally \neg g_{1}$ & $\{(r_0,2i) : i \in \mathbb{N}\}$ & $\infty$ & 5.61 \\
        Ex6\_odd & $\LTLglobally \neg g_{1}$ & $\{(r_0,2i) : i \in \mathbb{N}\}$ & $\infty$ & 5.62 \\
        TP\_Left & $\LTLeventually e$ & $\emptyset$ & 0 & 1.03 \\
        TP\_Right & $\LTLeventually e$ & $\{(a,0)\}$ & 1 & 1.07 \\
        \bottomrule
    \end{tabular}%
    }%
    \begin{minipage}{\wd\msrtblbox}
        \centering
        \usebox\msrtblbox
        \caption{\textbf{Synthesizing a Minimal Sufficient Reason (MSR).} The \emph{Effect} column is the property being explained. \emph{MSR} is the synthesized reason $R \subseteq \mathrm{IP}\times\mathbb{N}$ (input propositions $\times$ time steps); $|R|$ is its size, $\infty$ when the reason repeats on the loop.}
    \end{minipage}
\end{table}

\end{document}